\newtheorem{lemma}{\bf Lemma}
\newtheorem{axiom}{\bf Axiom}
\newtheorem{definition}{\bf Definition}
\newtheorem{theorem}{\bf Theorem}
\begin{document}

\title{Interactive Trimming against Evasive Online Data Manipulation Attacks: A Game-Theoretic Approach}

\author{
\IEEEauthorblockN{Yue Fu, Qingqing Ye, Rong Du, Haibo Hu}

\IEEEauthorblockA{The Hong Kong Polytechnic University}
Email: \{yuesandy.fu, roong.du\}@connect.polyu.hk, \{qqing.ye, haibo.hu\}@polyu.edu.hk}

\maketitle

\begin{abstract}
With the exponential growth of data and its crucial impact on our lives and decision-making, the integrity of data has become a significant concern. Malicious data poisoning attacks, where false values are injected into the data, can disrupt machine learning processes and lead to severe consequences. To mitigate these attacks, distance-based defenses, such as trimming, have been proposed, but they can be easily evaded by white-box attackers. The evasiveness and effectiveness of poisoning attack strategies are two sides of the same coin, making game theory a promising approach. However, existing game-theoretical models often overlook the complexities of online data poisoning attacks, where strategies must adapt to the dynamic process of data collection.

In this paper, we present an interactive game-theoretical model to defend online data manipulation attacks using the trimming strategy. Our model accommodates a complete strategy space, making it applicable to strong evasive and colluding adversaries. Leveraging the principle of least action and the Euler-Lagrange equation from theoretical physics, we derive an analytical model for the game-theoretic process. To demonstrate its practical usage, we present a case study in a privacy-preserving data collection system under local differential privacy where a non-deterministic utility function is adopted. Two strategies are devised from this analytical model, namely, Tit-for-tat and Elastic. We conduct extensive experiments on real-world datasets, which showcase the effectiveness and accuracy of these two strategies.
\end{abstract}

\section{Introduction}
\label{Introduction}
In the era of big data and AI, the sheer volume and ubiquity of data have profound impact on our daily life and the world at large. As such, the integrity of data stands as a cornerstone for high-quality data analysis and decision making. Unfortunately, data integrity is under perpetual threat --- malicious entities frequently engage in data manipulation, fabricating falsified values to skew outcomes in their favor.

The issue of data manipulation has been a focal point in the data management community, for example in the field of 
knowledge graph~\cite{banerjee2021stealthy, zhu2022binarizedattack}, federated recommendation systems~\cite{song2020poisonrec,rong2022fedrecattack,wu2022fedattack}, and countermeasures~\cite{yeh2023planning}. Data manipulation attacks also pose immediate threats to the training process of machine learning systems. Given these high stakes, it is imperative for data collectors to take safeguard measures to detect and neutralize data poisoning attacks, while retaining good quality of the rest (benign) data. 

To reduce the impact of data manipulation attacks, one approach is to sanitize the input dataset. A classic method is distance-based sanitization, also known as trimming, where the defender calculates the distance $d_i$ for each data point $i$ and removes any point with $d_i > \theta_d$, a threshold chosen by the defender~\cite{kloft2012security}. Popular distance-based defenses against data manipulation attacks include~\cite{koh2022stronger,laishram2016curie}, by optimizing a designated objective function. However, such strategies are static and neglect the evasive nature of adversaries, that is, they always manage to circumvent these defensive measures~\cite{ou2019mixed}. Therefore, an evasion-aware defense strategy must consider potential evasion strategies employed by these adversaries. Game theory is a common tool to find a dynamic balance between evasive attackers and defenders, known as Nash equilibrium. Recently, a few game-theoretical models~\cite{ou2019mixed,zhang2017game} have been proposed for {\bf static} data poisoning attacks, where data are collected in a single round. However, in many real-life data collection systems, data are frequently updated or streaming, and the collection process is continual or in multi-round. As such, a static defensive strategy is insufficient, as adversaries can adapt their strategies in each round. Due to the immense complexity of potential strategies a dynamic attacker might deploy, it has rarely been explored in the field of {\bf online data manipulation attacks}.

In this paper, our objective is to derive a feasible Stackelberg equilibrium within a \textbf{complete trimming strategy space} to defend against data manipulation attacks, specifically in the context of online data poisoning. Our game-theoretic model is anchored in the simplicity of the trimming strategy and is shown to achieve a genuine equilibrium within its complete strategy space. The findings are validated using real-world machine learning data across widely-used algorithms, including k-means, SVM, and SOM classification. We illustrate how the threshold and poison values are determined and elucidate the impact of each scheme on the system's final outcome. Additionally, through empirical studies, we illustrate that \textbf{attackers who behave irrationally and diverge from rational strategies will merely gain less utility from poison values}. The key contributions of our work can be encapsulated as follows:

$\cdot$ We propose an interactive game-theoretic model for online data poisoning attacks and defenses using the trimming mechanism. This model streamlines the formulation process, accommodates a complete strategy spectrum, and simplifies the derivation of Stackelberg equilibrium, even against evasive and colluding attackers with diversified poisoning strategies.

$\cdot$ We utilize the principle of least action and the Euler-Lagrange equation in theoretical physics to build an analytical model for the game-theoretic process. The model is in the form of the Lagrangian that governs the system in both equilibrium and non-equilibrium states.

$\cdot$ We present a case study in a privacy-preserving data collection system under local differential privacy (LDP)~\cite{duchi2013local, du2021collecting, fu2023collecting} where a non-deterministic utility function is adopted. Two strategies are devised from this analytical model, namely, Tit-for-tat and Elastic, based on which we apply the Euler-Lagrange equation to derive the system's steady-state solution.

$\cdot$ We conduct extensive experiments across varied scenarios using diverse real-world datasets to validate the effectiveness and accuracy of our proposed method.

The rest of this paper is organized as follows. Section \ref{Preliminaries} provides an introduction to the least action principle and the Euler-Lagrange function. Section \ref{Modelformulation} presents the game-theoretical model of the data collection game. Section \ref{Endless Collection Game} constructs the analytical model of the infinite collection game. Section \ref{Non-deterministic Utility} discusses the scenario where the system has a non-deterministic utility function. Section \ref{Experiments} shows the experimental results and Section \ref{Relatedwork} reviews the related work. Finally, we conclude this work in Section \ref{Conclusion}. 
\section{Preliminaries}
\label{Preliminaries}
\subsection{Generalized Coordinates}
In classical mechanics, the state of a physical system is often described by specifying its position and velocity. However, when dealing with complex systems or those subject to constraints, it's beneficial to use generalized coordinates~\cite{lagrange1853mecanique}, which are parameters that can describe the configuration of a system using the minimum number of independent variables.

When we refer to a system with $s$ degrees of freedom, we mean the system can move in $s$ independent ways. For example, a simple pendulum has one degree of freedom: the angle from the vertical axis. A double pendulum, however, has two degrees of freedom: the angle of each arm.

The use of generalized coordinates $(q_1, q_2,..., q_s)$ allows us to express the system's configuration independently of the choice of a coordinate system. This is particularly useful in dealing with systems where Cartesian coordinates are not convenient. The velocities $(\dot{q}_1, \dot{q}_2,..., \dot{q}_s)$ are the time derivatives of the generalized coordinates and represent how fast each coordinate is changing.

The trajectory of a system in its configuration space is a path that describes how the generalized coordinates change over time. It is analogous to a storyline of the system's motion, and describes where it is and how it moves at every moment in time. The system can be further described using the Lagrangian $\mathcal{L}(q_1, q_2,..., q_s,\dot{q_1}, \dot{q_2},..., \dot{q_s}, t)$. The action, a function of $\mathcal{L}$, is the integral of the trajectory, i.e., $S = \int_{t_1}^{t_2} \mathcal{L}(\boldsymbol{q},\dot{\boldsymbol{q}},t) dt$.
\subsection{The Least Action Principle}
The motion laws of a mechanical system can be expressed using the least action principle. It states that out of all possible paths that a system can take, the actual path is the one that minimizes the action. This requires that the first-order variation of the action with respect to the generalized coordinates and velocities equals zero:
\begin{equation}
\delta S = \delta \int_{t_1}^{t_2} \mathcal{L}(q_1, q_2,..., q_s,\dot{q_1}, \dot{q_2},..., \dot{q_s}, t) dt = 0,
\label{leastaction}
\end{equation}
where $\delta$ is the variation of S, i.e., an infinitely small incremental change to a function. The solution to equation \ref{leastaction} yields the Euler-Lagrange equation, whose proof can be found in any classical mechanics textbook:
\begin{lemma}
(Euler-Lagrange Equation). A necessary condition for $\delta S=0$ is:
\begin{equation}
\frac{\partial \mathcal{L}}{\partial q_i} - \frac{d}{dt}\left(\frac{\partial \mathcal{L}}{\partial \dot{q_i}}\right) = 0, \quad i=1,2,...s.
\label{ELeq}
\end{equation}
\end{lemma}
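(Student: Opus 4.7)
The plan is to derive the Euler-Lagrange equation directly from the variational condition $\delta S = 0$ by perturbing the true trajectory and isolating an integrand that must vanish pointwise. First, I would introduce a one-parameter family of admissible paths $q_i(t) + \varepsilon \eta_i(t)$, where each $\eta_i$ is an arbitrary smooth function with $\eta_i(t_1) = \eta_i(t_2) = 0$ so that the endpoints of the trajectory are held fixed. Expanding $\mathcal{L}$ to first order in $\varepsilon$ and using $\delta \dot{q}_i = d(\delta q_i)/dt$, the variation of the action becomes
\[
\delta S = \int_{t_1}^{t_2} \sum_{i=1}^{s} \left( \frac{\partial \mathcal{L}}{\partial q_i}\, \delta q_i + \frac{\partial \mathcal{L}}{\partial \dot{q}_i}\, \delta \dot{q}_i \right) dt.
\]

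Next, I would apply integration by parts to the second term in each summand. The boundary contribution $\bigl[(\partial \mathcal{L}/\partial \dot{q}_i)\, \delta q_i\bigr]_{t_1}^{t_2}$ vanishes because the variations are zero at the endpoints, leaving
\[
\delta S = \int_{t_1}^{t_2} \sum_{i=1}^{s} \left( \frac{\partial \mathcal{L}}{\partial q_i} - \frac{d}{dt} \frac{\partial \mathcal{L}}{\partial \dot{q}_i} \right) \delta q_i\, dt = 0.
\]
Since the generalized coordinates are independent, I can vary a single $\delta q_j$ at a time while setting the others to zero, reducing the problem to showing that each bracketed factor is identically zero on $(t_1,t_2)$.

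The crucial final step invokes the fundamental lemma of the calculus of variations: if the integral of a continuous function against every admissible test function $\delta q_j$ vanishes, then the function itself is identically zero. This is where I expect the main subtlety to lie, because the step is not purely algebraic but analytic. The standard justification proceeds by contradiction, assuming the bracketed expression is nonzero at some interior time $t^*$, and then constructing a nonnegative bump function $\delta q_j$ concentrated in a neighborhood of $t^*$ with the same sign as the expression there. This forces $\delta S \neq 0$, contradicting the hypothesis and establishing $\partial \mathcal{L}/\partial q_i - (d/dt)(\partial \mathcal{L}/\partial \dot{q}_i) = 0$ for each $i = 1, 2, \ldots, s$, which is the desired necessary condition.
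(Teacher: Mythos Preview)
Your derivation is correct and is precisely the standard textbook argument: perturb the path with fixed-endpoint variations, expand to first order, integrate by parts to kill the $\delta\dot q_i$ term, and invoke the fundamental lemma of the calculus of variations on each independent coordinate. The paper itself does not supply a proof of this lemma at all---it simply states the result and remarks that ``the proof can be found in any classical mechanics textbook''---so your proposal is exactly the kind of argument the paper is deferring to.
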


The Euler-Lagrange equation is a set of $s$ second-order ordinary differential equations, which govern the motion of the system. The equation describes how the system evolves over time, given the initial conditions of the generalized coordinates and velocities. 
\section{Game-Theoretic Model Formulation}
\label{Modelformulation}
\subsection{Threat Model}
\textbf{Attack Model.} We assume that the attacker possesses an equivalent level of information as the data collector. This implies that the attacker has full knowledge of the strategy employed by the data collector in the previous round, for example, the data collector's trimming positions. The attacker is also in agreement with the data quality standards set by the data collector, and they are acutely aware of how the poison values they send are treated. In other words,  we adopt a \textbf{white-box attack} as our attack model, which corresponds to a game with complete information. Conversely, should the attacker lack the capability to ascertain the data collector's strategy and data quality standards from the previous round, it would result in an asymmetric information scenario between the attacker and the data collector. This scenario is aligned with a black-box attack model and a game of incomplete information, which is beyond the scope of this paper. 

\textbf{Defensive Goal.} Our game-theoretic model aims to counteract a general malicious threat model where attackers are colluding, opportunistic, and \textbf{evasive}. The term ``colluding" refers to Sybil attacks, in which attackers can coordinate and share strategies to orchestrate their poison values. This situation is plausible, as these attackers may originate from a single botnet launched by one adversary. ``Opportunistic" describes attackers whose goal is to maximize the deviation of estimated statistics from the ground truth, manipulating poison values to their advantage. ``Evasive" pertains to attackers who are consistently rational, knowledgeable, and skilled enough to evade existing countermeasures by manipulating the poison value distribution \cite{cheu2021manipulation}. We believe this threat model is more comprehensive (and thus more realistic) than all existing models that commonly restrict their attacking strategies or assume that the collector has any apriori knowledge of these strategies.
\subsection{Payoff Functions}Assuming a publicly recognized data quality standard denoted by $Quality\_Evaluation()$, we establish payoff functions for both parties within the context of data manipulation attacks. Equipped with this standard, the collector can assess the intensity of poison values based on the data provided by the adversary and further determine the subsequent strategy. The existence of this metric is necessary for building up a game-theoretic model. Using this standard, let $P$ denote the payoffs for poisoning and $T$ for trimming. The game between the collector and the adversary is a zero-sum game where any gain for the adversary implies a loss for the collector and vice versa, i.e., $P_{collector} = -P_{adversary}$. However, the collector also incurs loss of accuracy due to incorrectly trimming honest values, denoted by $-T$. Hence, the collector's payoff function is $(-P-T)$.

\subsection{Strategy Space}
\subsubsection{Single Poison Value Case}
This subsection discusses the strategy space for both parties. In the single value case, where the adversary injects only one poison value, their strategy is denoted by the injection point. Similarly, the collector's strategy is determined by a trimming point in the input domain. Thus, the strategy space is represented by a pair of values $(x_{\text{adversary}}, x_{\text{collector}})$ in the input domain.

Rational players do not randomly choose strategy points from the entire space. Trimming incurs loss of utility by removing benign values, while the loss from poison values increases with more malicious data injected. However, the trimming overhead decreases as more data points are removed, making the collector more cautious when trimming.

\begin{figure}
\centering
 \subfigure[$x_L$ at the balance point]{\includegraphics[width=0.22\textwidth]{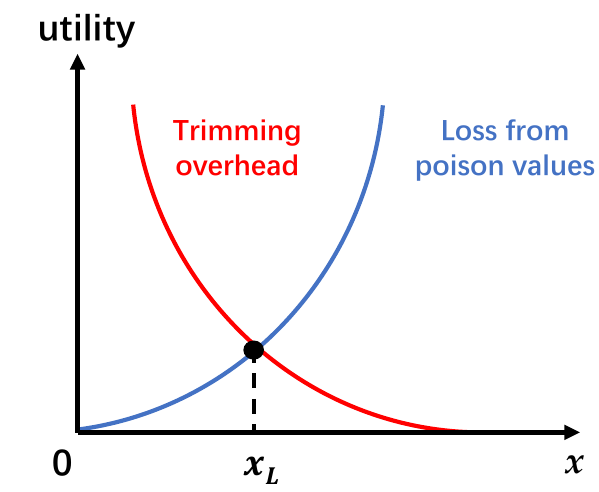}}
    \subfigure[a mixed strategy point]{\includegraphics[width=0.22\textwidth]{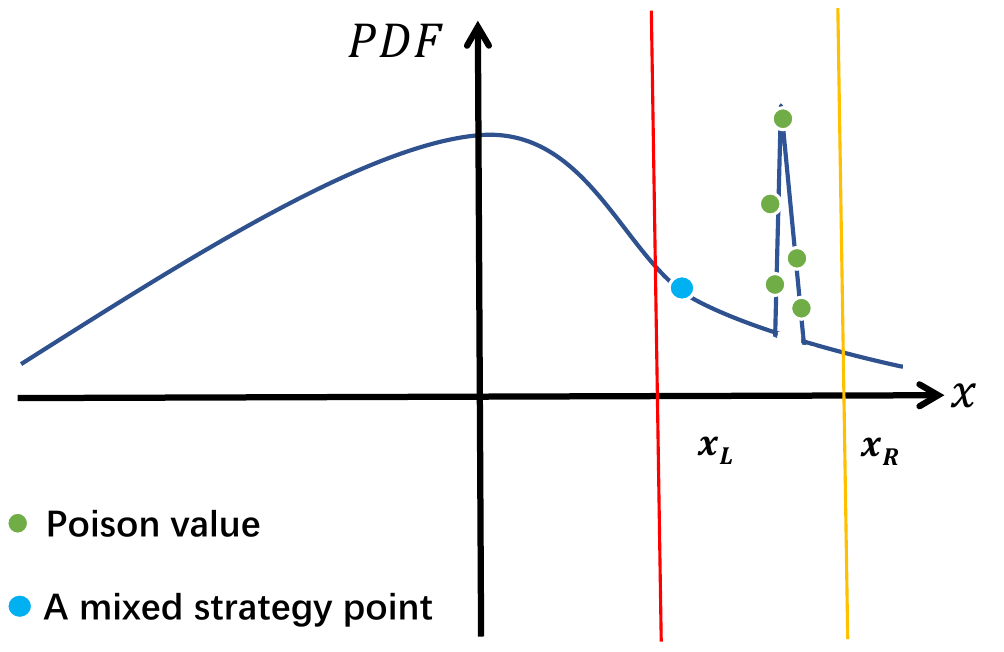}}
\caption{The definition of $x_L$, and arbitrary poison value distributions represented by a mixed strategy point}
\label{fig:defofxl}
\vspace{-5mm}
\end{figure}

As shown in Fig. \ref{fig:defofxl} (a), there exists a tradeoff between the loss caused by poison values and the overhead caused by trimming. A balance point, denoted by $x_L$, is present, such that $P(x_L) = T(x_L)$. This balance point is where the payoff for the collector and the adversary is equal, and below which the collector is not motivated to trim the data any further. In other words, a rational collector would only trim the data up to a certain point where the benefits of trimming outweigh the costs, and below that point, she would accept the risk of data poisoning to retain the accuracy. In contrast, the collector evaluates the largest acceptable value, beyond which she will definitely remove any values so that any rational adversary will not inject poison values outside of that point. As shown in Fig. \ref{fig:xlandxr}, let $x_R$ denote the maximum value that, according to the collector's belief, the adversary is willing to inject. Therefore, we have:

\begin{definition}
Let $[x_L, x_R]$ be the domain of poison values. We say the adversary plays soft if he injects poison values near $x_L$ and gains $\underline{P}$, and he plays hard if he injects poison values near $x_R$ and gains $\overline{P}$. Conversely, the collector plays soft if she trims near $x_R$ and gains $-\underline{T}$, and she plays hard if she trims near $x_L$ and gains $-\overline{T}$.
\end{definition}

\begin{figure}
\centering
\includegraphics[width=0.45\columnwidth]{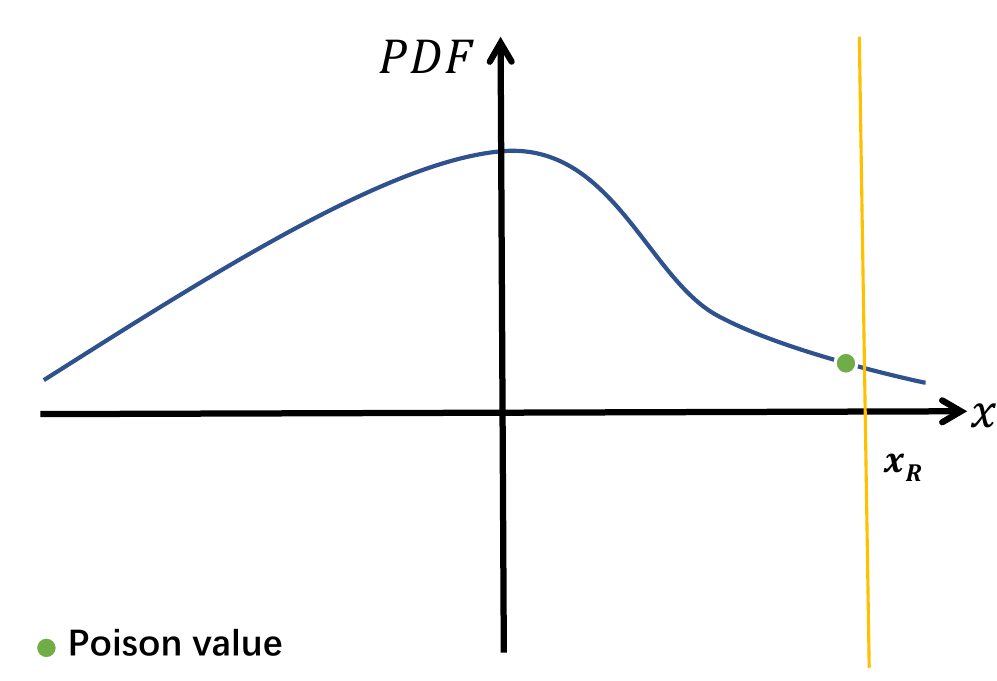}
\includegraphics[width=0.45\columnwidth]{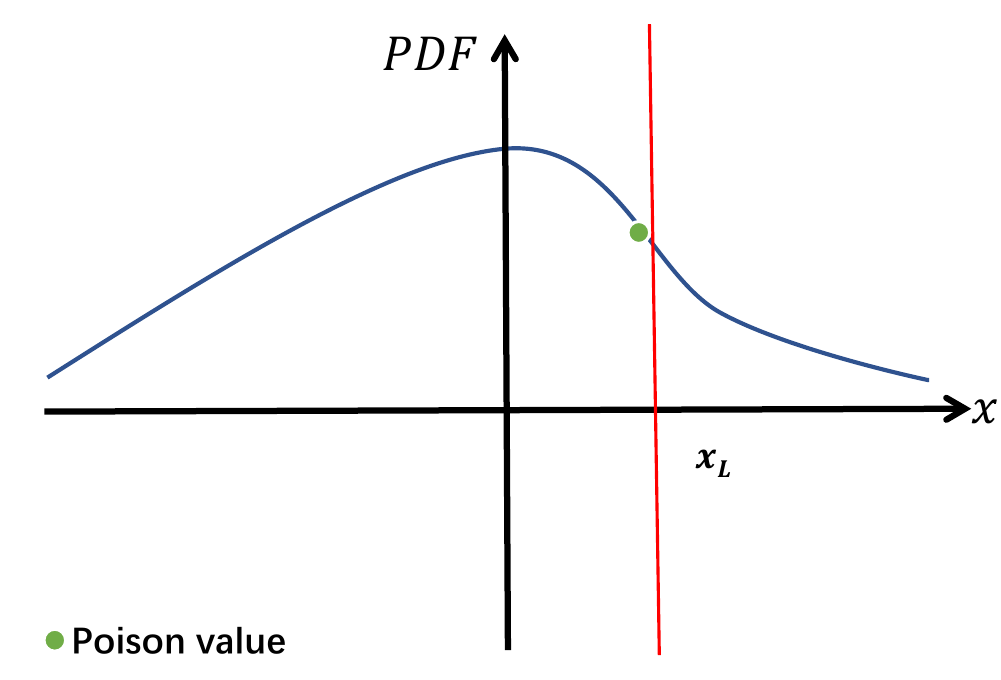}
\caption{Definition of $x_L$ and $x_R$ for a single poison value}
\label{fig:xlandxr}
\vspace{-4mm}
\end{figure}

Let the $(x_c, x_a)$ pair denote the strategies chosen by both parties, where $x_c$ denotes the trimming point, and $x_a$ denotes the point at which the adversary decides to inject poison values, and both $x_c$ and $x_a$ fall in the domain $[x_L, x_R]$. It is important to note that the strategy space is complete for both the collector and the adversary, i.e., any strategy in the domain can be chosen by both parties.

\subsubsection{General Case}
Now we discuss the general case where any poison value distribution can be deployed. Without loss of generality, any point $x_p$ in the domain $[x_L, x_R]$ can be represented as a linear combination of $x_L$ and $x_R$, i.e., there exists $p_L$ and $p_R$ such that $x_p=p_Lx_L+p_Rx_R$. This can also be viewed as a \textbf{mixed strategy} in the sense of game theory, that is, the player chooses to play $x_L$ with probability $p_L$, and $x_R$ with probability $p_R$. As such, any single poison value injection strategy over $[x_L, x_R]$ can be reduced to a mixed strategy represented by $p_Lx_L+p_Rx_R$.

Since all factors in this linear combination of $x_L$ and $x_R$ are additive, any poison value can be reduced to a single point in the strategy space. As illustrated in Fig. \ref{fig:defofxl} (b), we assert that any poison value distribution defined on $[x_L, x_R]$ can be reduced to a mixed strategy of a single poison value. As such, the strategy space for both the collector and the adversary is complete in this general case.

\subsection{Sequential Moves}In a scenario where the data collection process consists of a single round, it embodies a strategic game. Here, both the attacker and the defender simultaneously select their strategies, resulting in a straightforward and uncomplicated Nash equilibrium. As depicted in Table \ref{payoffs}, this situation mirrors the prisoner's dilemma, culminating in a unique equilibrium wherein both the adversary and the player opt for a tough stance, despite a gentler approach being mutually beneficial.

However, in real-world applications, data collection tends to be a continuous, multi-round process. This can be represented as a Stackelberg game, characterized by sequential moves where one participant's actions follow those of the other. This structure fosters cooperation, given that players can retaliate against defection, particularly when the number of rounds is indefinite or unknown. By opting for a gentler approach rather than a tough one, players can achieve a globally optimal state. The intricacies of this infinite game will be examined and modeled in detail in the following section.
\begin{table}[]
\centering
\caption{The payoff matrix of the ultimatum game, $\overline{P}>\overline{T}>>\underline{P}>\underline{T}>0$}
\renewcommand{\arraystretch}{1.5}
\begin{tabular}{c|ccc|}
\cline{2-4}
                                                 & \multicolumn{3}{c|}{Adversary}                                                               \\ \hline
\multicolumn{1}{|c|}{\multirow{3}{*}{Collector}} & \multicolumn{1}{c|}{}     & \multicolumn{1}{c|}{Soft}                 & Hard                 \\ \cline{2-4}
\multicolumn{1}{|c|}{} & \multicolumn{1}{c|}{Soft} & \multicolumn{1}{c|}{$(-\underline{P}-\underline{T}, \underline{P})$} & $(\overline{P}-\underline{T}), -\overline{P}$ \\ \cline{2-4}
\multicolumn{1}{|c|}{}                           & \multicolumn{1}{c|}{Hard} & \multicolumn{1}{c|}{$(-\overline{T}, 0)$} & $(-\overline{T}, 0)$ \\ \hline
\end{tabular}
\label{payoffs}
\vspace{-4mm}
\end{table} 
\section{Infinite Collection Game}
\label{Endless Collection Game}

\subsection{Overview}In the previous section, we emphasized the utmost importance of transforming the collection process into an infinite, roundwise repeated game to foster cooperation between the collector and the adversaries. When dealing with a limited-round scenario, wherein the game is confined to a specific number of rounds, denoted as $N$, adversaries may be tempted to defect in the final round, triggering a domino effect of defections from the second-to-last round backwards. To address this critical issue, the game must be ingeniously designed to encompass an infinite number of rounds, thereby ensuring continuous data collection.

Fig. \ref{fig:endlessgame} overviews such infinite game, wherein a data collector gathers data from the data stream (step \textcircled{3}), and an adversary attempts to inject poison values into the collected data along with normal users (step \textcircled{2}). A public board, accessible to the adversary, enables the collector to record the untrimmed data (step \textcircled{1}, \textcircled{6}). In each round, the collector collects and trims the same amount of data (step \textcircled{4}), and then determines the trimming threshold in the next round \textcircled{5}. 

Under this framework, each round becomes an invaluable opportunity to build trust, foster cooperation, learn from past experiences, and adapt strategies accordingly. The infinite nature of this repeated game actively promotes cooperation among game-theoretically rational players, paving the way for the emergence of trust and culminating in mutually beneficial outcomes in the long run.

\begin{figure}[h]
\centering
\includegraphics[width=\columnwidth]{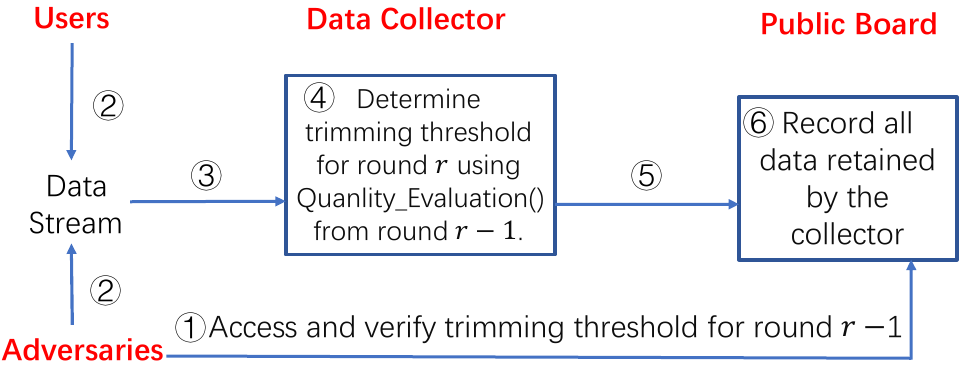}
\caption{An overview of the infinite game}
\label{fig:endlessgame}
\vspace{-5mm}
\end{figure}

\subsection{Analytical Model}
To formalize the infinite collection game, we employ the principle of least action from analytical mechanics. As the game involves an infinite number of samples collected in an infinite number of rounds, it can be viewed as a streaming process with a fixed number of samples gathered in each round. Consequently, the parameter $r$ can be regarded as a continuum, functioning as a timer within our system, analogous to the role of time $t$ in classical mechanics.

The utility functions of the adversary and the collector, $u_a$ and $u_c$, respectively, are natural coordinates that uniquely determine the state of the system. These functions are cumulative effects over $r$ and can be treated as continuous and differentiable functions of $r$. With this setting, the evolution of the system shares the same spatiotemporal structure as classical mechanics, where the general coordinate is replaced by the utility functions of both parties, and time $t$ is replaced by round $r$. We then have the fundamental principle of the infinite collection game:

\begin{axiom}
The state of the infinite collection game is determined by the least action principle, which is similar to equation \ref{leastaction}:
\begin{equation}
\delta S = \delta \int_{r_1}^{r_2} \mathcal{L}(u_a(r), u_c(r),\dot{u_a}(r), \dot{u_c}(r), r) dr = 0,
\label{leastaction2}
\end{equation}
where $\dot{u_a}=\frac{du_a}{dr}$ and $\dot{u_c}=\frac{du_c}{dr}$ are generalized velocities, and $\mathcal{L}$ is the Lagrangian.
\end{axiom}

And we have:
\begin{lemma}
The Euler-Lagrange equation of equation \ref{leastaction2} is given by:
\begin{equation}
\frac{\partial \mathcal{L}}{\partial u_a} - \frac{d}{dr}\left(\frac{\partial \mathcal{L}}{\partial \dot{u_a}}\right) = 0, \frac{\partial \mathcal{L}}{\partial u_c} - \frac{d}{dr}\left(\frac{\partial \mathcal{L}}{\partial \dot{u_c}}\right) = 0
\label{EL}
\end{equation}
\end{lemma}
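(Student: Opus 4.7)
The plan is to treat this as a direct specialization of Lemma~1 (the classical Euler--Lagrange equation) to the case of $s=2$ generalized coordinates, where the pair $(u_a, u_c)$ plays the role of $(q_1, q_2)$ and the round index $r$ plays the role of time $t$. Axiom~1 has already asserted that the trajectory of the collection game is selected by $\delta S = 0$ under the Lagrangian $\mathcal{L}(u_a, u_c, \dot{u}_a, \dot{u}_c, r)$, so the task reduces to carrying out the standard variational derivation carefully for two coordinates.

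First, I would expand the variation as
\begin{equation*}
\delta S = \int_{r_1}^{r_2}\!\left( \frac{\partial \mathcal{L}}{\partial u_a}\delta u_a + \frac{\partial \mathcal{L}}{\partial u_c}\delta u_c + \frac{\partial \mathcal{L}}{\partial \dot{u}_a}\delta \dot{u}_a + \frac{\partial \mathcal{L}}{\partial \dot{u}_c}\delta \dot{u}_c \right) dr.
\end{equation*}
Next, using $\delta \dot{u}_i = \tfrac{d}{dr}\delta u_i$, I would integrate the last two terms by parts, letting the boundary contributions vanish under the usual admissibility condition $\delta u_a(r_1)=\delta u_a(r_2)=\delta u_c(r_1)=\delta u_c(r_2)=0$. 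This reorganizes $\delta S$ into a single integral whose integrand is a sum of the two coefficients $\bigl[\tfrac{\partial \mathcal{L}}{\partial u_a} - \tfrac{d}{dr}\tfrac{\partial \mathcal{L}}{\partial \dot{u}_a}\bigr]\delta u_a$ and $\bigl[\tfrac{\partial \mathcal{L}}{\partial u_c} - \tfrac{d}{dr}\tfrac{\partial \mathcal{L}}{\partial \dot{u}_c}\bigr]\delta u_c$. Finally, I would invoke the fundamental lemma of the calculus of variations, applied once for each coordinate by choosing variations supported only in one component at a time, to conclude that both bracketed expressions must vanish identically on $[r_1, r_2]$.

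The main obstacle is not algebraic but justificatory: one must argue that $\delta u_a$ and $\delta u_c$ are genuinely independent variations so that the fundamental lemma can be applied to each separately. This is not automatic, because the adversary's and the collector's payoffs are coupled (indeed, zero-sum in the static case). I would address this by appealing to the Stackelberg structure described in Section~\ref{Modelformulation}: in each round the two parties select strategies that are formally independent choices, and the cumulative utilities $u_a(r)$ and $u_c(r)$ are therefore two independent degrees of freedom of the configuration space. Any strategic coupling is absorbed into the functional form of $\mathcal{L}$ itself rather than imposed as a holonomic constraint on the coordinates. With this justification in place, the derivation proceeds exactly as in the textbook proof of Lemma~1 for $s=2$, yielding equation~\ref{EL}.
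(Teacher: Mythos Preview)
Your proposal is correct. The paper itself offers no explicit proof of this lemma: it is stated immediately after Axiom~1 as a direct instantiation of Lemma~1 (the general Euler--Lagrange equation, whose proof the paper defers to any classical mechanics textbook) with $s=2$, $(q_1,q_2)=(u_a,u_c)$, and $t$ replaced by $r$. Your write-up takes exactly this route---specializing Lemma~1 to two degrees of freedom---but goes further by actually unpacking the variational computation and the integration-by-parts step that the paper leaves implicit. The extra paragraph justifying the independence of $\delta u_a$ and $\delta u_c$ is more than the paper provides; it is a reasonable addition, though in the paper's framework this independence is simply taken for granted as part of adopting $(u_a,u_c)$ as generalized coordinates.
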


\subsection{Equilibrium State}
From equation \ref{EL}, we can derive some immediate results regarding the Stackelberg equilibrium and the behavior of the collector and the adversary in the infinite collection game. When we reach a Stackelberg equilibrium, it is already a convergence state that occurs after infinite iterations of responding to each other's actions. As such, if such a convergence exists, there is no interaction between the collector and the adversary, as if they are evolving independently. Therefore, we have:

\begin{lemma}
The Lagrangian of the system is additive to that of $u_a$ and $u_c$, that is, $\mathcal{L}=\mathcal{L}(u_a)+\mathcal{L}(u_c)$.
\label{the:additive}
\end{lemma}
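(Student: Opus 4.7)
Proof proposal: The plan is to argue the additivity in two stages, first a game-theoretic decoupling step and then a standard analytical-mechanics step in which non-interacting subsystems have additive Lagrangians.

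First, I would formalize what the authors mean by ``no interaction'' at a Stackelberg equilibrium. A Stackelberg equilibrium, by definition, is a fixed point of the leader--follower best-response dynamics: neither $u_a$ nor $u_c$ changes as a result of the other's play because each party's round-$r$ response to the other is already the one prescribed by its equilibrium strategy. I would argue that this implies, along the equilibrium trajectory, the infinitesimal evolution of $u_a$ over $r$ depends only on $u_a$ and $\dot{u}_a$ (and possibly $r$), and symmetrically for $u_c$. In other words, the Euler--Lagrange equation for $u_a$ obtained from Lemma~2 must reduce to an ODE in $u_a$ alone, and likewise the equation for $u_c$ is an ODE in $u_c$ alone.

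Next I would translate this decoupling statement into a constraint on the Lagrangian. Writing out the two Euler--Lagrange equations from (\ref{EL}), the requirement that the first involves no $u_c$ or $\dot{u}_c$ forces $\partial \mathcal{L}/\partial u_a$ and $\partial \mathcal{L}/\partial \dot{u}_a$ to be independent of $u_c$ and $\dot{u}_c$; the symmetric condition from the other equation forces $\partial \mathcal{L}/\partial u_c$ and $\partial \mathcal{L}/\partial \dot{u}_c$ to be independent of $u_a$ and $\dot{u}_a$. Integrating, $\mathcal{L}$ must split as $\mathcal{L}_a(u_a,\dot{u}_a,r)+\mathcal{L}_c(u_c,\dot{u}_c,r)$ up to an additive term that is a total derivative with respect to $r$, and such a term contributes only boundary values to the action and can be discarded. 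This gives the claimed decomposition
\begin{equation*}
\mathcal{L} = \mathcal{L}(u_a) + \mathcal{L}(u_c).
\end{equation*}

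The step I expect to be the main obstacle is the first one, namely turning the informal phrase ``no interaction at equilibrium'' into a mathematically precise condition that validly yields the decoupling of the Euler--Lagrange equations. In classical mechanics this is usually justified by taking a spatial limit in which two subsystems are moved arbitrarily far apart, so that their coupling vanishes; here the analogous notion is that at a Stackelberg fixed point the marginal influence of one player's utility on the other's optimal response is zero. I would make this precise by appealing to the best-response optimality conditions of the Stackelberg equilibrium (each player is already at a stationary point of their payoff given the other's strategy), concluding that cross-partials of $\mathcal{L}$ between the $a$-variables and the $c$-variables vanish along the equilibrium trajectory, which is precisely the separability condition invoked above.
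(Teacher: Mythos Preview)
Your proposal is aligned with the paper's own justification, but you should know that the paper does not give a formal proof of this lemma at all. Its entire argument is the single sentence preceding the statement: at a Stackelberg equilibrium the system has already converged after infinitely many mutual best responses, so ``there is no interaction between the collector and the adversary, as if they are evolving independently,'' and from this the additivity of $\mathcal{L}$ is asserted directly. In other words, the paper treats the lemma as an immediate consequence of the physical analogy with non-interacting subsystems and does not pass through the Euler--Lagrange equations or an integration argument as you do.

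Your two-stage plan is therefore a genuine elaboration rather than a different route: the decoupling heuristic you call ``the main obstacle'' is exactly, and only, what the paper invokes, and your second stage (deriving separability of $\mathcal{L}$ from decoupled Euler--Lagrange equations, up to a total $r$-derivative) is additional rigor the paper does not supply. Your candid flag on the first step is well placed: neither the paper nor your sketch actually proves that a Stackelberg fixed point forces the cross-partials of $\mathcal{L}$ to vanish; the paper simply adopts this as a modeling postulate for the equilibrium regime, to be contrasted later with the non-equilibrium case where an explicit interaction term $U(u_a,u_c)$ is reinstated.
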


Since the Lagrangian should keep the form unvaried with respect to translation of $r$ and $u$, that is, we should arrive at the same Stackelberg equilibrium for any outset we choose for $r$ and $u$. From this, we have:

\begin{theorem}
$\mathcal{L}=\mathcal{L}(\dot{u}^2)$ and $\dot{u}=constant$ for any Stackelberg equilibrium state.
\label{the:NEproperty}
\end{theorem}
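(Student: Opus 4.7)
The plan is to follow the classic Landau--Lifshitz style derivation of the free-particle Lagrangian, adapted to the game-theoretic setting. By Lemma~\ref{the:additive}, the Lagrangian splits as $\mathcal{L}(u_a,\dot{u_a})+\mathcal{L}(u_c,\dot{u_c})$, so it suffices to argue for a single coordinate $u\in\{u_a,u_c\}$ and show $\mathcal{L}=\mathcal{L}(\dot u^2)$ with $\dot u$ constant; the joint statement then follows term by term.

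First I would exploit the translation invariance in the ``time'' variable $r$. The Stackelberg equilibrium is a steady convergence state, so shifting the round index $r\mapsto r+r_0$ must leave the equilibrium unchanged; hence $\mathcal{L}$ cannot depend on $r$ explicitly and reduces to $\mathcal{L}(u,\dot u)$. Next, I would apply translation invariance in $u$: since payoffs are defined up to an additive baseline (the origin of the utility scale is arbitrary), replacing $u$ by $u+u_0$ must produce the same equations of motion, which forces $\partial\mathcal{L}/\partial u=0$ at equilibrium and leaves $\mathcal{L}=\mathcal{L}(\dot u)$.

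To upgrade $\mathcal{L}(\dot u)$ to $\mathcal{L}(\dot u^2)$ I would invoke a sign-reversal symmetry analogous to the isotropy of space in mechanics. Because the trimming/poisoning game is symmetric under reflecting the direction of utility accumulation at equilibrium (a rational player gains or forgoes utility at the same Lagrangian cost, since the convergence condition is indifferent to whether one approaches it from above or below), $\mathcal{L}$ must be even in $\dot u$, so it depends on $\dot u$ only through $\dot u^2$. Feeding $\mathcal{L}=\mathcal{L}(\dot u^2)$ into the Euler--Lagrange equation~\ref{EL} kills the first term (no $u$ dependence) and leaves $\frac{d}{dr}\!\left(\frac{\partial\mathcal{L}}{\partial\dot u}\right)=0$, i.e. $2\dot u\,\mathcal{L}'(\dot u^2)$ is constant in $r$. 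Under the mild non-degeneracy assumption that $\mathcal{L}'(\dot u^2)\neq 0$ on the equilibrium branch, this forces $\dot u=\mathrm{const}$, yielding the claim for each of $u_a,u_c$.

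The main obstacle I anticipate is the sign-reversal/isotropy step: unlike Euclidean space, the utility axis does not come with an automatic reflection symmetry, so I would need to justify it carefully from the model. The cleanest route is to argue that at a Stackelberg equilibrium either party could be labelled as leader or follower without changing the stationary trajectory, and that the payoff structure $P,T$ enters the Lagrangian only through magnitudes of deviation from the equilibrium rate $\dot u$; any odd-in-$\dot u$ component of $\mathcal{L}$ would break this symmetry and contradict the existence of the stationary Stackelberg solution established in Section~\ref{Modelformulation}. Once this symmetry is pinned down, the remaining steps are immediate from the Euler--Lagrange equation.
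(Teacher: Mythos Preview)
Your proposal is correct and follows essentially the same Landau--Lifshitz route as the paper: homogeneity in $r$ and $u$ reduces $\mathcal{L}$ to $\mathcal{L}(\dot u)$, the direction/sign symmetry refines this to $\mathcal{L}(\dot u^2)$, and the Euler--Lagrange equation then forces $\dot u$ constant. If anything you are more careful than the paper, which simply asserts the isotropy step and the invertibility of $\partial\mathcal{L}/\partial\dot u$ without the explicit non-degeneracy caveat you flag.
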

\begin{proof}
As $\mathcal{L}$ is only a function of the magnitude of $\dot{u}$ and is independent of the direction of $\dot{u}$, we have $\mathcal{L}=\mathcal{L}(\dot{u}^2)$. Since the Lagrangian is uniform with respect to both $r$ and $u$, it can only be an explicit function of $\dot{u}$, i.e., $\mathcal{L}=\mathcal{L}(\dot{u})$. Substitute this into equation \ref{EL}, we have $\frac{\partial \mathcal{L}}{\partial u_i} = 0$. The Euler-Lagrange function can then be written as $\frac{d}{dr}\left(\frac{\partial \mathcal{L}}{\partial \dot{u_i}}\right) = 0$. As $\frac{\partial \mathcal{L}}{\partial \dot{u_i}}$ is only a function of $u$, we have $\dot{u}=constant$. This completes the proof.
\end{proof}

Finally, we attain the form of the Lagrangian for the Stackelberg equilibrium state:

\begin{theorem}
The Lagrangian of any Stackelberg equilibrium state can be written as $\mathcal{L}=m_a\dot{u_a}^2+m_c\dot{u_c}^2$, where $m_a$ and $m_c$ are two factors regarding the adversary and the collector.
\end{theorem}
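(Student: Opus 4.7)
The plan is to combine the additivity statement in Lemma \ref{the:additive} with the squared-velocity form established in Theorem \ref{the:NEproperty}, and then pin down the specific $\dot{u}^2$ coefficient by a Galilean-style invariance argument borrowed from the derivation of the free-particle Lagrangian in classical mechanics. First, I would invoke Lemma \ref{the:additive} to split the Lagrangian as $\mathcal{L} = \mathcal{L}_a(\dot{u_a}^2) + \mathcal{L}_c(\dot{u_c}^2)$, so that it suffices to fix the functional form of each party's piece separately.

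Next, for a single party (say the adversary), I would consider an infinitesimal rate-boost $\dot{u_a} \mapsto \dot{u_a} + \epsilon$. The stated uniformity of the Lagrangian with respect to translations of $r$ and $u$ (used just before Theorem \ref{the:NEproperty}) should, at equilibrium, extend to invariance under such a uniform shift of the generalized velocity, in the sense that the boosted Lagrangian must differ from the original by at most a total $r$-derivative so that the equations of motion are preserved. Expanding
\begin{equation}
\mathcal{L}_a\bigl((\dot{u_a}+\epsilon)^2\bigr) = \mathcal{L}_a(\dot{u_a}^2) + 2\epsilon \, \dot{u_a}\, \frac{\partial \mathcal{L}_a}{\partial \dot{u_a}^2} + O(\epsilon^2),
\end{equation}
the first-order correction is $2\epsilon \, \dot{u_a}\, \partial \mathcal{L}_a / \partial \dot{u_a}^2$. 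Since $\dot{u_a} = \frac{d u_a}{dr}$, this term is itself a total $r$-derivative of a function of $u_a$ alone \emph{iff} $\partial \mathcal{L}_a / \partial \dot{u_a}^2$ is independent of $\dot{u_a}$, i.e., a constant which I will call $m_a$. The same reasoning applied to the collector yields $\mathcal{L}_c = m_c \dot{u_c}^2$. Adding the two pieces gives the claimed form $\mathcal{L} = m_a \dot{u_a}^2 + m_c \dot{u_c}^2$.

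The main obstacle is justifying the velocity-boost invariance in a game-theoretic, rather than physical, setting. In classical mechanics the shift $v \mapsto v+\epsilon$ is legitimized by Galilean relativity; here the analogous claim is that uniformly re-scaling the rate at which either party accrues utility per round should not alter which trajectory is selected by the equilibrium. I would argue this from the uniformity of $r$ (rounds are interchangeable at equilibrium) together with the uniformity of $u$ (the absolute zero of cumulative utility is a free choice), which together force the action to depend only on differences, and hence on $\dot{u}$ up to additions that integrate to boundary terms. Once this invariance is accepted, the rest of the argument is the standard two-line expansion above and requires no further computation beyond what was already done for Theorem \ref{the:NEproperty}.
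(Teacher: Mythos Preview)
Your proposal is correct and follows essentially the same route as the paper: both arguments are the Landau--Lifshitz Galilean-boost derivation, expanding $\mathcal{L}((\dot{u}+\delta\dot{u})^2)$ to first order, requiring the increment $2\dot{u}\,\delta\dot{u}\,\partial\mathcal{L}/\partial\dot{u}^2$ to be a total $r$-derivative, and concluding that $\partial\mathcal{L}/\partial\dot{u}^2$ is a constant; additivity (Lemma~\ref{the:additive}) then gives the two-term sum. The only cosmetic difference is that you invoke additivity first and run the boost argument per party, whereas the paper runs the boost argument on a generic $u$ and applies additivity at the end.
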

\begin{proof}
Consider an infinitesimal increment $\delta\dot{u}$ of $\dot{u}$ in the Lagrangian $\mathcal{L}$. According to Theorem \ref{the:NEproperty}, it corresponds to a Lagrangian of the form
\begin{equation}
\mathcal{L}'=\mathcal{L}((\dot{u}+\delta\dot{u})^2).
\end{equation}
Expanding it as a power series in terms of $\delta\dot{u}$ and neglecting higher order terms, considering Lemma \ref{the:additive}, we have:
\begin{equation}
\mathcal{L}((\dot{u}+\delta\dot{u})^2)=\mathcal{L}(\dot{u}^2)+2\frac{\partial\mathcal{L}}{\partial\dot{u}^2}\dot{u}\delta\dot{u}.
\end{equation}
According to Theorem \ref{the:NEproperty}, at the Stackelberg equilibrium state, $\dot{u}$ is constant. Choosing different values of $\dot{u}$ as the origin yields Euler-Lagrange equations with the same form. This means that the difference in their Lagrangians, $2\frac{\partial\mathcal{L}}{\partial\dot{u}^2}\dot{u}\delta\dot{u}$, must be a total derivative with respect to $r$. Therefore, when substituted into equation \ref{leastaction2}, $\frac{d}{dr}2\frac{\partial\mathcal{L}}{\partial\dot{u}^2}\dot{u}\delta\dot{u}$ can be eliminated in $\delta S=0$, resulting in the same Euler-Lagrange equation. Hence, $2\frac{\partial\mathcal{L}}{\partial\dot{u}^2}\dot{u}\delta\dot{u}$ and $\dot{u}$ are linearly dependent. It follows that $\frac{\partial\mathcal{L}}{\partial\dot{u}^2}$ is independent of velocity, therefore we have:
\begin{equation}
\mathcal{L}=m\dot{u}^2/2,
\end{equation}
where $m$ is a proportionality constant related to the intrinsic properties of the system. Since there are two parties, according to Lemma \ref{the:additive}, we can express the overall Lagrangian as:
\begin{equation}
\mathcal{L}=m_a\dot{u_a}^2/2+m_c\dot{u_c}^2/2.
\end{equation}
This completes the proof.
\end{proof}

Referring to the intrinsic factors associated with the attributes of both parties, it is important to acknowledge that $m_a$ and $m_c$ serve merely as two logical concepts employed to depict the system's converged state. Remarkably, they are not necessary for the determination of our strategy, specifically the trimming threshold, as will be evident in the forthcoming derivations.

\subsection{Non-equilibrium State}
A system is in a non-equilibrium state if there is a permanent non-zero interaction between the collector and the adversary, where they continuously respond to each other's last response. To mathematically describe this interaction, a term $U(u_a, u_c)$ is added to the Lagrangian, which is a function of the positions $u_a$ and $u_c$ of the collector and the adversary, respectively. Therefore, where interaction exists between the collector and the adversary, the Lagrangian can be written as:
\begin{equation}
\mathcal{L}=m_a\dot{u_a}^2+m_c\dot{u_c}^2+U(u_a, u_c)
\label{Lagrangian}
\end{equation}

The interaction term $U(u_a, u_c)$ objectively reflects the response strength of a particular strategy in relation to deviations in data quality within the practical context. It quantifies the interaction effect between the user's action and the counteraction, depending on the scenario in which it is applied. In the upcoming section, we will derive the differential equation of the infinite game according to a given form of $U$. 
\section{Non-deterministic Utility}
\label{Non-deterministic Utility}
The Tit-for-tat strategy in game theory mirrors an opponent's previous action in repeated games, fostering cooperation by rewarding cooperation and punishing defection. In Section \ref{Modelformulation}, we assume a commonly acknowledged data quality norm for both parties. However, in some practical scenarios, the utility function of a data collection system may be non-deterministic, meaning the system's outcome cannot be predicted with certainty even with known inputs.\footnote{This often occurs in privacy-preserving systems using LDP for data collection, where participants add random noise to their data before sharing. While protecting sensitive information, the noise renders the system outcome probabilistic.} Directly applying Tit-for-tat to data collection could inadvertently trigger early termination of data exchange due to the probabilistic nature of data quality assessment. This vulnerability is inherent when using Tit-for-tat in its pure form. To avoid early termination, we propose the Elastic strategy as a variant of Tit-for-tat tailored for systems with uncertain outcomes.

It should be noted that numerous variants of Tit-for-tat exist, such as Tits-for-two-tats~\cite{axelrod1981evolution} and Generous Tit-for-tat~\cite{nowak1992tit}. They can also be adapted through Elastic strategies for repeated games with uncertainty. For simplicity, this paper focuses the discussion on the original Tit-for-tat. The insights can be readily extended to other variants of Tit-for-tat.

\subsection{Tit for Tat Strategy}
The data collector selects the following parameters: $Tth$, the trimming threshold; \textit{Quality\_Evaluation()}, which measures the quality of the data $X_i$ received in the $i$-th round; $Round\_no$, which represents the number of data collection rounds; $Quality\_Evaluation(X_0)$, the triggering condition; and \textit{Red}, a redundancy to ensure that the termination round is not too small. The procedure of Titfortat is given in Algorithm \ref{alg:Titfortat}.

\begin{algorithm}[]
\caption{Titfortat Strategy}
\SetKwInput{Input}{Input}
\SetKwInput{Output}{Output}

\Input{\textit{Quality\_Evaluation()}, $X_0$, \textit{Red}, $\underline{T}$, $\overline{T}$, $Round\_no$}
\Output{$Round\_terminate$}

$Tth$ $\gets \underline{T}$;

$Round\_terminate$ $\gets$ $Round\_no$;

\For{$i \gets 1$ \KwTo Round\_no}{
\If{$\textit{Quality\_Evaluation}(X_i) < \textit{Quality\_Evaluation}(X_0) + \textit{Red}$}{
$Tth \gets \overline{T}$;

$Round\_terminate$ $\gets i$;

break;
}
}
\Return $Round\_terminate$
\label{alg:Titfortat}
\end{algorithm}

It is easy to apply when utility is deterministic. As a trigger strategy requiring permanent termination of cooperation upon betrayal, we have the interaction term $U(u_a,u_c)$ for the Tit-for-tat strategy becomes $U(u_a,u_c)=0$, if $u_a=u_c$ and otherwise $U(u_a,u_c)=\infty$. In non-deterministic utility scenarios, cooperation termination may be triggered by normal jitter even if both parties are cooperative. Intuitively, the collector should compromise their roundwise gain to preserve redundancy and maximize long-run benefit. 

In the Stackelberg equilibrium, we assume that both the collector and the adversary have a symmetric setting. This means that if $u_a$ and $u_c$ are symmetric, the solution should also be symmetric. Let $g_c=\overline{T}-\underline{P}-\underline{T}$ and $g_a=\overline{P}$ be the roundwise gain for both parties during cooperation (which is the payoff of compliance minus betrayal), and $g_{ac}=\frac{g_a+g_c}{2}$ due to the symmetry axiom. The collector now expects a gain of $g_0=g_{ac}-\delta$, where $\delta$ is a compromise in data utility. If the adversary complies, the collector can observe compliance deterministically since the probability of the data utility being less than $g_0$ in the outcome is negligible. However, if the adversary defects at $g_{ac}$, the collector judges compliance with probability $p$ and defects with probability $1-p$ due to the perturbation's probabilistic nature. With these settings, we derive the following theorem concerning the Stackelberg equilibrium for the Tit-for-tat strategy:

\begin{theorem}
The condition for the adversary choosing to comply in the Tit-for-tat game is $\delta<\frac{d-dp}{1-dp}g_{ac}$, where $d$ denotes the roundwise discount rate of data utility acknowledged by both parties.
\end{theorem}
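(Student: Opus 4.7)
The plan is to treat the theorem as a standard incentive-compatibility constraint in a repeated game with probabilistic detection: the adversary will comply exactly when the discounted value of perpetual compliance exceeds the discounted value of persistent defection. Both values admit geometric-sum closed forms, so the argument reduces to one line of algebra once the payoff model is fixed.

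First I will write out the two continuation values for the adversary. Under compliance, the adversary plays at the compromise level $g_0=g_{ac}-\delta$, collects that amount each round, and never trips the Tit-for-tat trigger in Algorithm \ref{alg:Titfortat}; summing over rounds with discount factor $d$ gives
\begin{equation}
V_{\mathrm{comply}} \;=\; \sum_{r=0}^{\infty} d^{r}\,(g_{ac}-\delta) \;=\; \frac{g_{ac}-\delta}{1-d}.
\end{equation}
Under persistent defection the adversary collects $g_{ac}$ in each round (an extra $\delta$ relative to compliance), but the noisy quality estimate exceeds the threshold $g_0$ with only probability $p$; a single detection terminates cooperation and the post-detection continuation value collapses to $0$. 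This gives the one-step Bellman recursion $V_{\mathrm{defect}} = g_{ac} + pd\,V_{\mathrm{defect}} + (1-p)d\cdot 0$, whence
\begin{equation}
V_{\mathrm{defect}} \;=\; \frac{g_{ac}}{1-pd}.
\end{equation}

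Next I will impose the incentive-compatibility condition $V_{\mathrm{comply}} > V_{\mathrm{defect}}$, cross-multiply by the positive denominators $(1-d)$ and $(1-pd)$, and cancel the common $g_{ac}$ terms. The remainder simplifies to $g_{ac}\,d(1-p) > \delta(1-pd)$, and dividing by $(1-pd)>0$ yields exactly the stated bound $\delta < \frac{d-dp}{1-dp}\,g_{ac}$.

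The main obstacle is not the algebra but justifying the stochastic model behind $V_{\mathrm{defect}}$. Two modelling assumptions must be stated explicitly before the geometric recursion is invoked: (i)~a single observed violation terminates cooperation permanently, so the post-detection continuation value is $0$---this is exactly the break branch in Algorithm \ref{alg:Titfortat} that sets $Tth \gets \overline{T}$; and (ii)~the per-round detection events are i.i.d.\ with per-round escape probability $p$, which is appropriate because the LDP perturbation noise is drawn independently in each round. With these two assumptions on the table, the closed forms for both value functions are immediate and the theorem follows from the single line of algebra above.
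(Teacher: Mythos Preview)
Your proposal is correct and mirrors the paper's own proof essentially line for line: the paper also computes the compliance value via the recursion $g_{com}=g_0+dg_{com}=\frac{g_0}{1-d}$, the defection value via $g_{def}=g_{ac}+dpg_{def}=\frac{g_{ac}}{1-dp}$, and then imposes $g_{com}>g_{def}$ to obtain $\delta<\frac{d-dp}{1-dp}g_{ac}$. Your explicit articulation of the two modelling assumptions (permanent termination after a single detection, i.i.d.\ per-round escape probability $p$) is a welcome addition that the paper leaves implicit.
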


\begin{proof}
From the adversary's perspective, their current-round gain expectation when choosing to comply is 
\begin{equation}
g_{com}=g_{0}+dg_{com},\quad \text{or}\quad g_{com}=\frac{g_0}{1-d}. 
\end{equation}
However, if the adversary opts to defect, they will be assessed in the subsequent round as complying with probability $p$ and defecting with probability $1-p$. As a result, their current-round gain expectation becomes
\begin{equation}
g_{def}=g_{ac}+dpg_{def},\quad \text{or}\quad g_{def}=\frac{g_{ac}}{1-dp}.
\end{equation} 
The adversary will decide to comply if and only if $g_{com}>g_{def}$, which is equivalent to $g_0>\frac{1-d}{1-dp}g_{ac},\quad \text{or}\quad \delta<\frac{d-dp}{1-dp}g_{ac}$. This completes the proof.
\end{proof}

Should $p=1$, implying that the adversary is never identified as defecting, they would always opt to defect given the lack of consequences. In contrast, as $p\rightarrow 0$, signifying an increased likelihood of the adversary being flagged as defecting, a substantial adjustment must be made to $\delta$ to cultivate trust. This analysis underscores the complexities and trade-offs inherent in managing non-deterministic utility situations within data collection systems. The delicate balance between cooperation, trust, and data utility is crucial to the system's sustained success. Hence, given $\overline{T}, \underline{T}, \overline{P}, \underline{P}, p, d$, one can ascertain the $Tth$ of Tit-for-tat by selecting a $\delta$ according to their preference.

\subsection{Elastic Trigger Strategy}So far, we have discussed the equilibrium of the Tit-for-tat strategy, which is essentially a rigid trigger strategy. Unfortunately, while preserving redundancy effectively extends the period of cooperation, we should note that the game cannot achieve infinite rounds, as the probability of termination keeps increasing and will ultimately converge to 1 in the long run. A simple way to tackle this is to allow the trigger strategy to be elastic with forgiveness, namely, applying a penalty in the next round when a defection is detected instead of terminating the cooperation directly. This is shown in Algorithm \ref{alg:Elastic}.

\begin{algorithm}[]
\caption{Elastic Trigger Strategy}
\SetKwInput{Input}{Input}
\SetKwInput{Output}{Output}

\Input{\textit{Quality\_Evaluation()}, $\underline{T}$, $\overline{T}$, $Round\_no$, $k$}
\Output{$Tth_i$}

$Tth_1$ $\gets \underline{T}$;

$QE_i=\frac{Quality\_Evaluation(X_i)}{max(Quality\_Evaluation(\cdot))}$

\For{$i \gets 2$ \KwTo $Round\_no$}{
$Tth_i=(1-kQE_i)\underline{T}+kQE_i\overline{T}$
}
\Return $Tth_i$
\label{alg:Elastic}
\end{algorithm}

That implies an interaction between the adversary and the collector exists, and an equilibrium position exists such that the interaction pulls the relative utility $|u_a-u_c|$ back to the equilibrium position by a force equal to $-\frac{\partial U}{\partial u_a}$ or $-\frac{\partial U}{\partial u_c}$. We expand this interaction into a power series about $(u_a-u_c)$. When the deviation between the two is small, $u_a-u_c$ is also small, so only the first non-zero item is reserved. This is a quadratic term, and we introduce a proportionality constant k that describes the strength of the interaction. Therefore, we have:

\begin{definition}
The interaction term $U(u_a,u_c)$ for the elastic trigger strategy is
\begin{equation}
U(u_a,u_c)=k(u_a-u_c)^2/2.
\label{Elainteraction}
\end{equation}
\end{definition}

According to this, we have the following theorem:
\begin{theorem}
The utility functions of the adversary and the collector periodically oscillate with respect to $r$ in the setting of the elastic strategy.
\end{theorem}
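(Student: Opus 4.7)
The plan is to apply the Euler-Lagrange equations (Lemma 2) to the Lagrangian in equation \ref{Lagrangian} with the interaction term specified in equation \ref{Elainteraction}, and then show that the resulting coupled ODE system reduces to a simple harmonic oscillator in an appropriate relative coordinate.

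First, I would compute the partial derivatives: $\partial \mathcal{L}/\partial \dot{u_a} = 2m_a \dot{u_a}$ and $\partial \mathcal{L}/\partial u_a = \pm k(u_a - u_c)$, with the sign determined by interpreting $U$ as a potential whose restoring force is $-\partial U/\partial u_a$, consistent with the textual description preceding equation \ref{Elainteraction}. Substituting into equation \ref{EL} yields the coupled system
\begin{equation}
2m_a \ddot{u_a} = -k(u_a - u_c), \qquad 2m_c \ddot{u_c} = k(u_a - u_c). \nonumber
\end{equation}

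Next, I would decouple this system through a change of variables analogous to the two-body problem in classical mechanics. Setting $q = u_a - u_c$ and $Q = m_a u_a + m_c u_c$, subtracting the two equations yields
\begin{equation}
\ddot{q} = -\frac{k(m_a + m_c)}{2 m_a m_c}\, q = -\omega^2 q, \nonumber
\end{equation}
a standard harmonic oscillator with angular frequency $\omega = \sqrt{k(m_a+m_c)/(2 m_a m_c)}$ and general solution $q(r) = A\cos(\omega r + \phi)$. Summing the equations gives $\ddot{Q} = 0$, so $Q(r) = \alpha r + \beta$ is uniform motion. Inverting the change of variables, both $u_a(r)$ and $u_c(r)$ decompose into an affine drift in $r$ plus a sinusoidal component of period $2\pi/\omega$, which establishes the periodic oscillation claimed.

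The main obstacle, beyond reconciling the sign convention between the paper's $\mathcal{L} = T + U$ form and the standard Lagrangian $T - V$, is handling the linear drift coming from the $Q$ mode. I would argue that under the symmetry assumption between the two parties already invoked in Section \ref{Non-deterministic Utility}, together with the fact that both utilities are measured relative to a common baseline at $r=r_1$, the drift coefficient $\alpha$ vanishes (or can be absorbed by shifting the origin in utility space), so that the remaining motion is purely the periodic sinusoid in $q$, giving genuine periodicity for both $u_a$ and $u_c$ about their shared mean.
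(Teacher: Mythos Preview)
Your proposal is correct and follows essentially the same route as the paper: substitute the elastic interaction into the Lagrangian, apply the Euler--Lagrange equations, and identify the resulting coupled system with a two-mass harmonic oscillator. The paper's own proof is in fact terser than yours---it simply asserts the analogy with two masses connected by a spring and writes down the form $u(r)=A\cos(\omega r+\phi)$ without carrying out the center-of-mass/relative-coordinate decoupling or addressing the affine drift in the $Q$ mode that you flag; your handling of the sign convention and the drift is more careful than the original.
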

\begin{proof}
The Lagrangian of this system is given by 
\begin{equation}
\mathcal{L}=m_a\dot{u_a}^2+m_c\dot{u_c}^2+k(u_a-u_c)^2/2
\end{equation}
by plugging equation \ref{Elainteraction} into equation \ref{Lagrangian}. Applying equation \ref{EL} to this, we have 
\begin{equation}
m_a\ddot{u_a}+k(u_a-u_c)=0, m_c\ddot{u_c}+k(u_a-u_c)=0
\end{equation}
These two equations have the same form of ordinary differential equations concerning that of a double harmonic oscillator system, where two masses $m_a$ and $m_c$ are connected by a spring with spring constant $k$. The solution will also be the same, in the form of 
\begin{equation}
u(r)=A\cos(\omega r+\phi).
\end{equation}
This completes the proof.
\end{proof}
\section{Experiments}
\label{Experiments}
In this section, the performance of the proposed approach is evaluated through its application to real-world datasets. Experiments are implemented in MATLAB R2021b on a desktop computer with Intel i7-10700K RTX 3090 eight-core CPU, 128GB RAM, and Windows 10 OS.

\subsection{Experimental Setup}
\textbf{Datasets.} In our experiments, we use 5 real-world numerical datasets. \textbf{Control}, \textbf{Vehicle}, and \textbf{Letter}~\cite{UCI2023} are standard UCI datasets, frequently used in machine learning research. \textbf{Taxi}~\cite{taxi2018}, extracted from the January 2018 New York Taxi data, records the pick-up times during a day and includes 1,048,575 integers from 0 to 86,340, normalized to the range $[-1,1]$. \textbf{Creditcard}~\cite{OpenML2015} comprises numerical results of PCA transactions, which are sanitized to preserve confidentiality. A summary of all dataset information is shown in Table \ref{tab:dataset-info}.

\begin{table}[htbp]
\centering
\caption{Dataset Information}
\label{tab:dataset-info}
\begin{tabular}{|l|c|c|c|}
\hline
Dataset & Instances & Features & Clusters \\
\hline
\textbf{\MakeUppercase{Control}} & 600 & 60 & 6\\
\hline
\textbf{\MakeUppercase{Vehicle}} & 752 & 18 & 4\\
\hline
\textbf{\MakeUppercase{Letter}} & 20000 & 16 & 26\\
\hline
\textbf{\MakeUppercase{Taxi}} & 1048575 & 1 & 1\\
\hline
\textbf{\MakeUppercase{Creditcard}} & 284807 & 31 & 4\\
\hline
\end{tabular}
\end{table}

\textbf{Parameter Settings.} In order to standardize the description of our approach across different datasets, we describe the positions of poison value injection and trimming in terms of data percentiles. The position for each trimming round is determined by the parameter $Tth$.

We implement several benchmark schemes. Groundtruth represents the result obtained by running the original dataset without any poison value injection. Ostrich assumes no defensive measures are taken, i.e., accepting all poison values. Since the adversary is also aware of this, the poison value is injected at the 99th percentile in each round. We also implement two baseline defence schemes where the data collector sets static thresholds. In the $Baseline_{0.9}$ scheme, the adversary randomly injects poison values in the percentile range of [0.9, 1],  while in the $Baseline_{static}$ scheme, the adversary injects poison values at the percentile $(Tth-1\%)$. The latter is the ideal attack, which assumes that the adversary has the ability to accurately determine the data collector's $Tth$ for each round and always adds poison values at the location that benefits itself the most.

We implement our three proposed schemes, namely Titfortat, $Elastic_{0.1}$, and $Elastic_{0.5}$. These schemes employ different strategies for setting the trimming and injection positions for poison values, with varying levels of adaptability based on previous adversary actions. In the Titfortat scheme, the untriggered trim position is set at the $(Tth+1\%)$ percentile, but once the adversary triggers the judgement, the subsequent rounds will always be trimmed at the $(Tth-3\%)$ percentile. In the Elastic schemes, the initial trim position is set at the $(Tth-3\%)$ percentile, and the initial injection position for poison value is set at the $(Tth+1\%)$ percentile. In the subsequent rounds, the data collector dynamically adjusts the trimming position for the next round based on the previous round's adversary's poison injection position $A(i)$, according to the rule $T(i+1)=Tth+k(A(i)-Tth-1\%)$, while the adversary adjusts the poison injection position for the next round $A(i+1)$ based on the previous round's data collector's trimming position $T(i)$, according to the rule $A(i+1)=Tth-3\%+k(T(i)-Tth)$. $Elastic_{0.1}$ and $Elastic_{0.5}$ represent the parameter $k$ taking the values 0.1 and 0.5, respectively.

\subsection{Stackelberg Equilibrium Results on k-Means Clustering}This subsection presents the clustering results from k-means applied to \textbf{Control}, \textbf{Vehicle}, and \textbf{Letter}. We compare the performance when both the data collector and the adversary follow Stackelberg equilibrium strategies. For each experiment, we consider 20 rounds of games, with results averaged over 100 repetitions. Titfortat is assumed not to experience early terminations. Three attack ratio intervals are [0, 0.01], [0.05, 0.15], and [0.2, 0.5], corresponding to the situations where there are few, moderate, and many poison values, respectively. Results under different attack ratios are named after the dataset name and the corresponding attack ratio interval, for example, $\textbf{Control}_{[0,0.01]}$.

Fig. \ref{fig:NE090} illustrates the results when the Tth is set to 0.9. The y-axis in this chart depicts two distinct measurements: the Sum of Squares Errors (SSE) and Distance. SSE is defined by the equation $SSE = \sum_{i=1}^{n} (y_i - \hat{y_i})^2$, where $y_i$ represents the observed values and $\hat{y_i}$ stands for the predicted values. On the other hand, `Distance' illustrates the discrepancy between the actual centroid of the clustering and the ground truth, as measured by the Euclidean distance. We observe that during intervals of low attack ratios, the volume of poison values is minimal. As such, Ostrich performs optimally and manifests the smallest offset. In such situations, all schemes implementing trimming end up with additional overhead costs. As the attack ratio escalates, pushing into a grey area where the impact of trimming to eliminate poison values is counterbalanced by false positives, the performance of the Ostrich scheme gradually deteriorates. In contrast, when the attack ratio falls within a large interval, where poison values become dominant, our proposed schemes significantly outperform both baseline schemes. Also, it is evident that Ostrich has the highest SSE. Moreover, in almost all the cases presented, our proposed solutions outperform both baseline approaches, with $Elastic_{0.5}$ demonstrating the best performance.

The results when the Tth is adjusted to 97\% are depicted in Fig. \ref{fig:NE097}, from which similar conclusions can be drawn. Here, the trimming method adopted is more conservative, thus diminishing the overhead at lower attack ratios. However, the effectiveness of this approach becomes less distinct at higher attack ratios.

\begin{figure*}[]
    \centering
    \begin{minipage}[t]{\linewidth}
\centering
\includegraphics[width=0.8\textwidth]{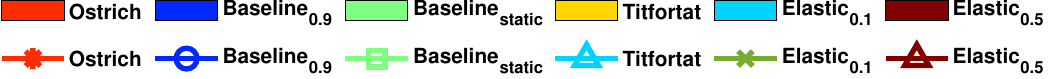}
\end{minipage}
    \subfigure[$\textbf{Control}_{[0,0.1]}$]{\includegraphics[width=0.32\textwidth]{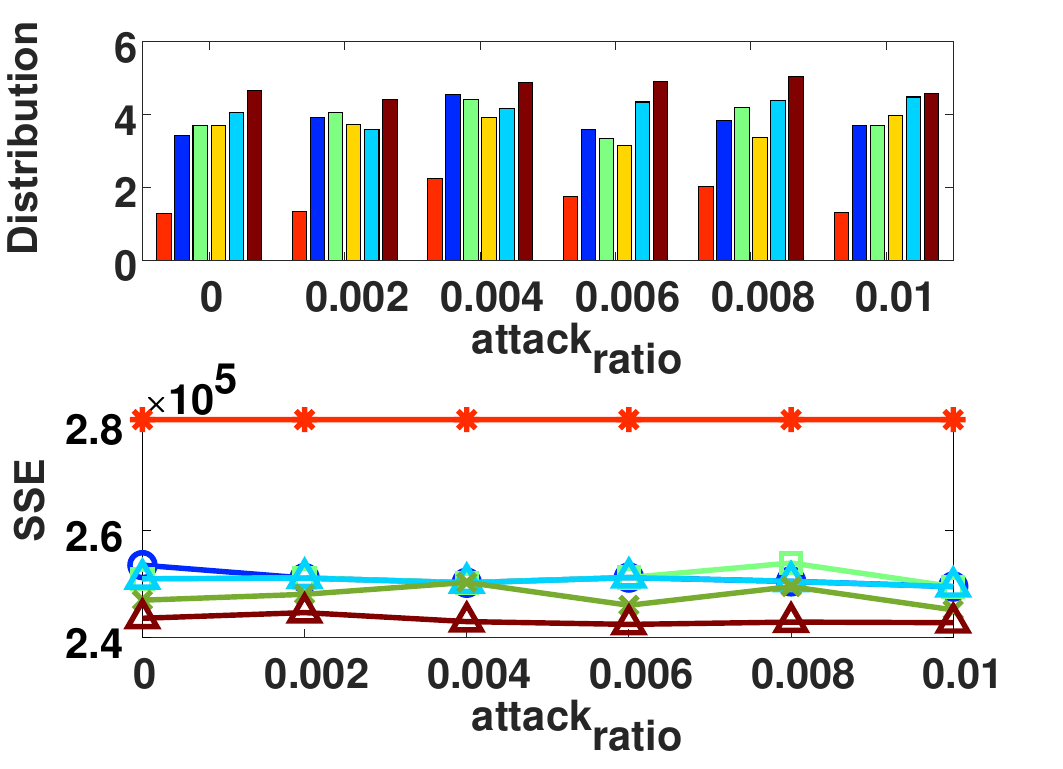}}\hspace{-0.02\textwidth}
    \subfigure[$\textbf{Vehicle}_{[0,0.1]}$]{\includegraphics[width=0.32\textwidth]{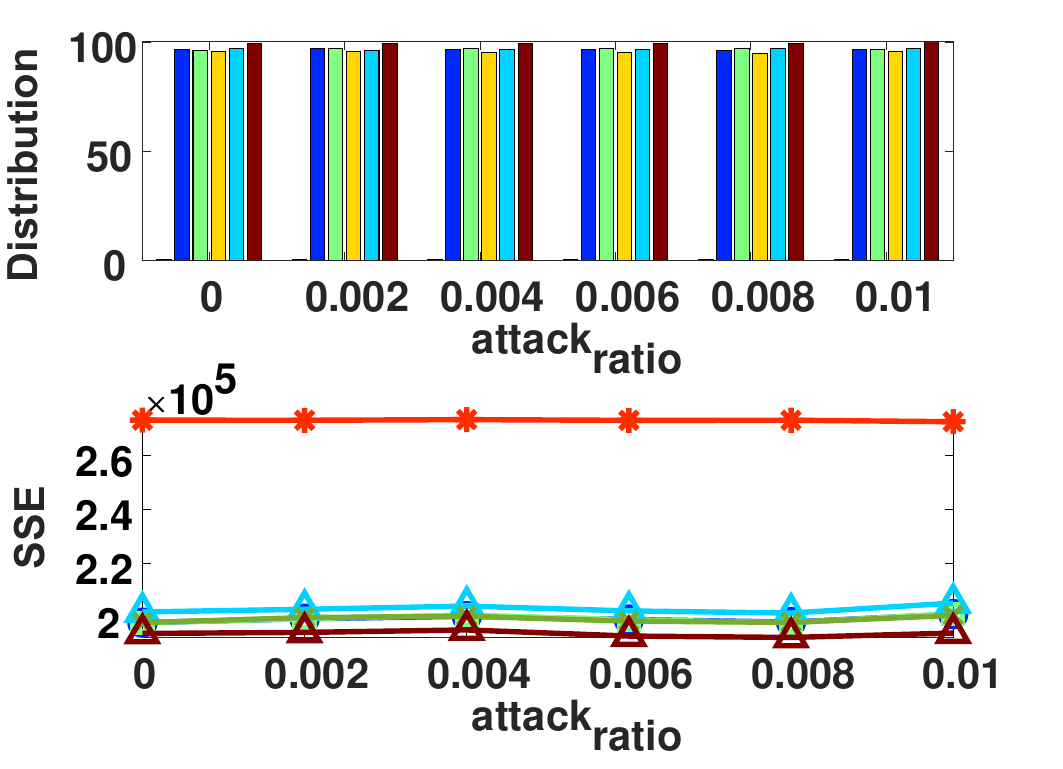}}\hspace{-0.02\textwidth}
    \subfigure[$\textbf{Letter}_{[0,0.1]}$]{\includegraphics[width=0.32\textwidth]{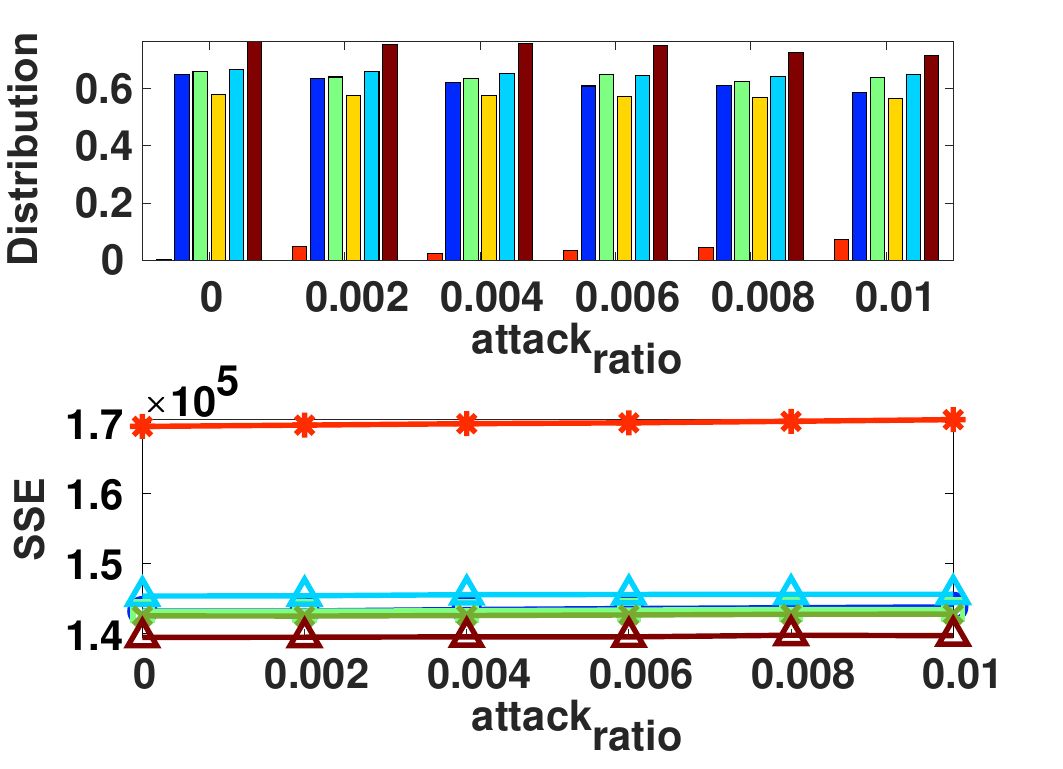}}\hspace{-0.02\textwidth}

    \subfigure[$\textbf{Control}_{[0.05,0.15]}$]{\includegraphics[width=0.32\textwidth]{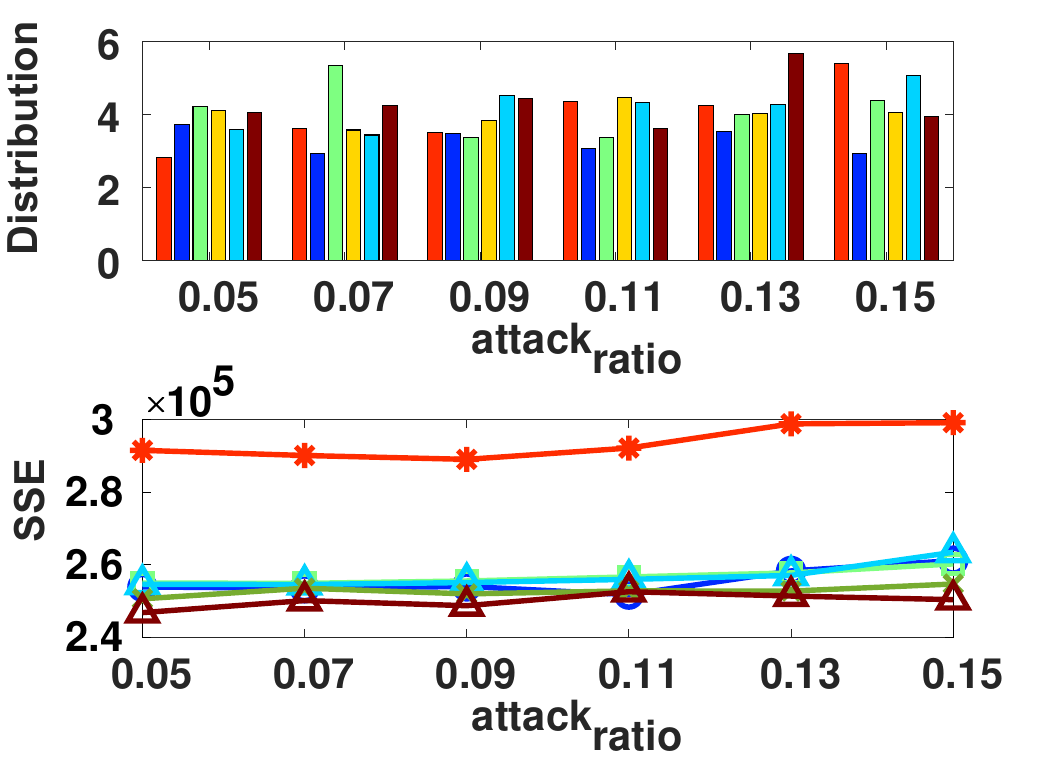}}\hspace{-0.02\textwidth}
    \subfigure[$\textbf{Vehicle}_{[0.05,0.15]}$]{\includegraphics[width=0.32\textwidth]{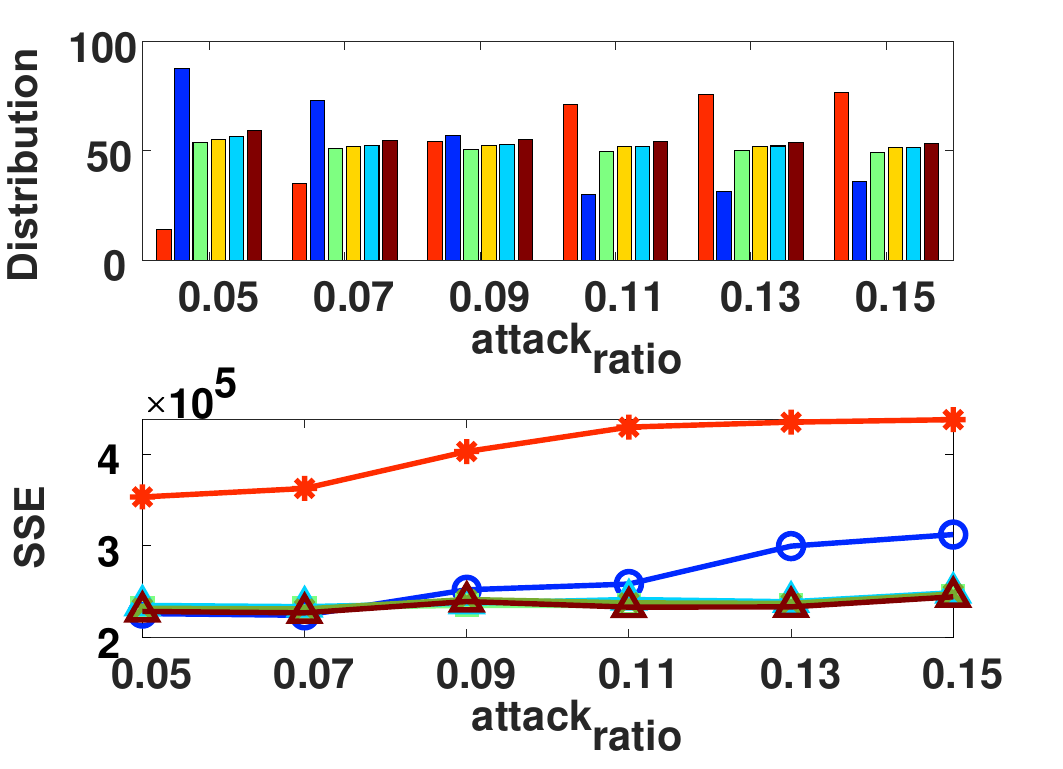}}\hspace{-0.02\textwidth}
    \subfigure[$\textbf{Letter}_{[0.05,0.15]}$]{\includegraphics[width=0.32\textwidth]{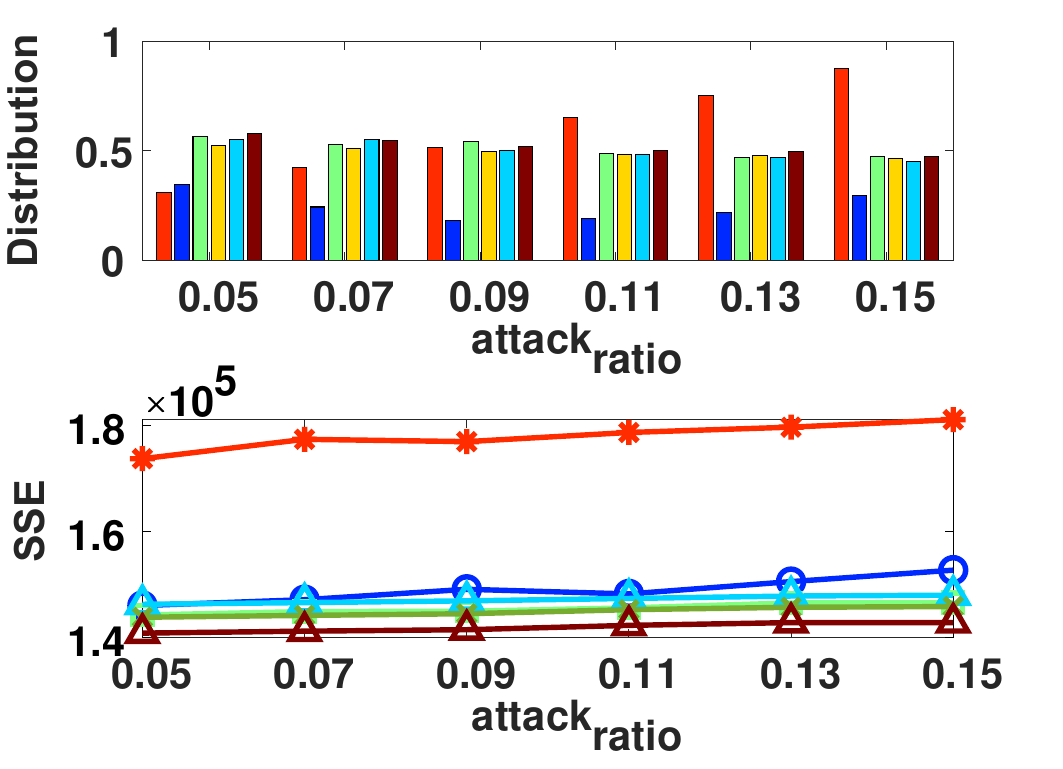}}\hspace{-0.02\textwidth}

    \subfigure[$\textbf{Control}_{[0.2,0.5]}$]{\includegraphics[width=0.32\textwidth]{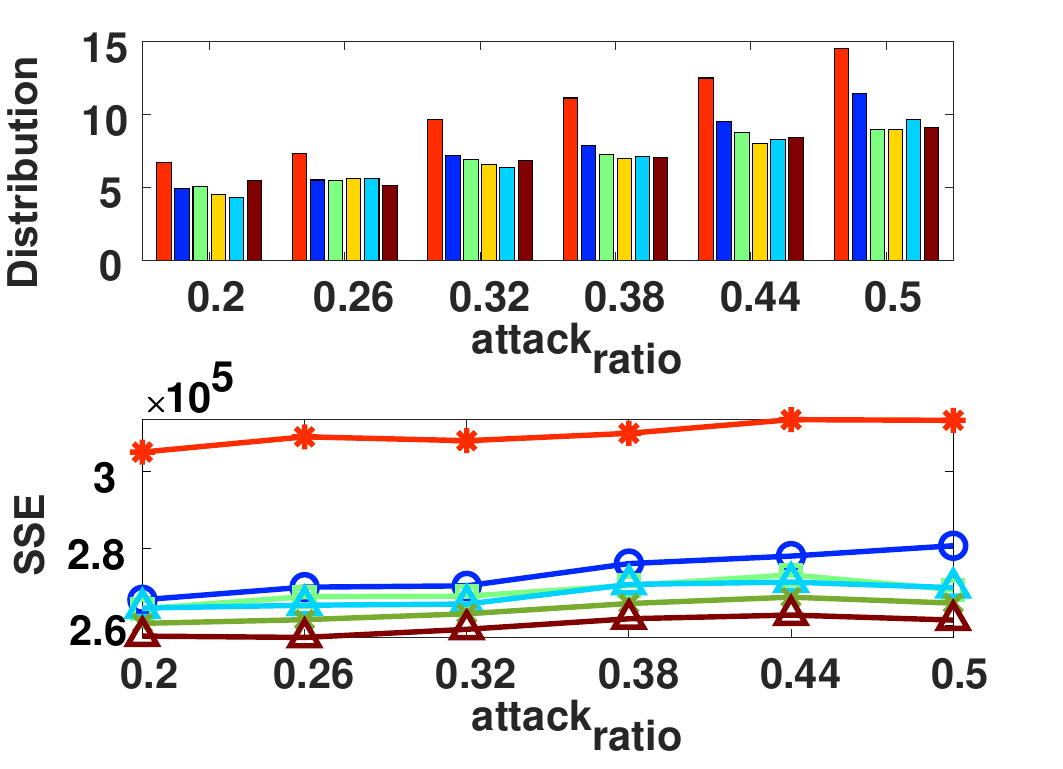}}\hspace{-0.02\textwidth}
    \subfigure[$\textbf{Vehicle}_{[0.2,0.5]}$]{\includegraphics[width=0.32\textwidth]{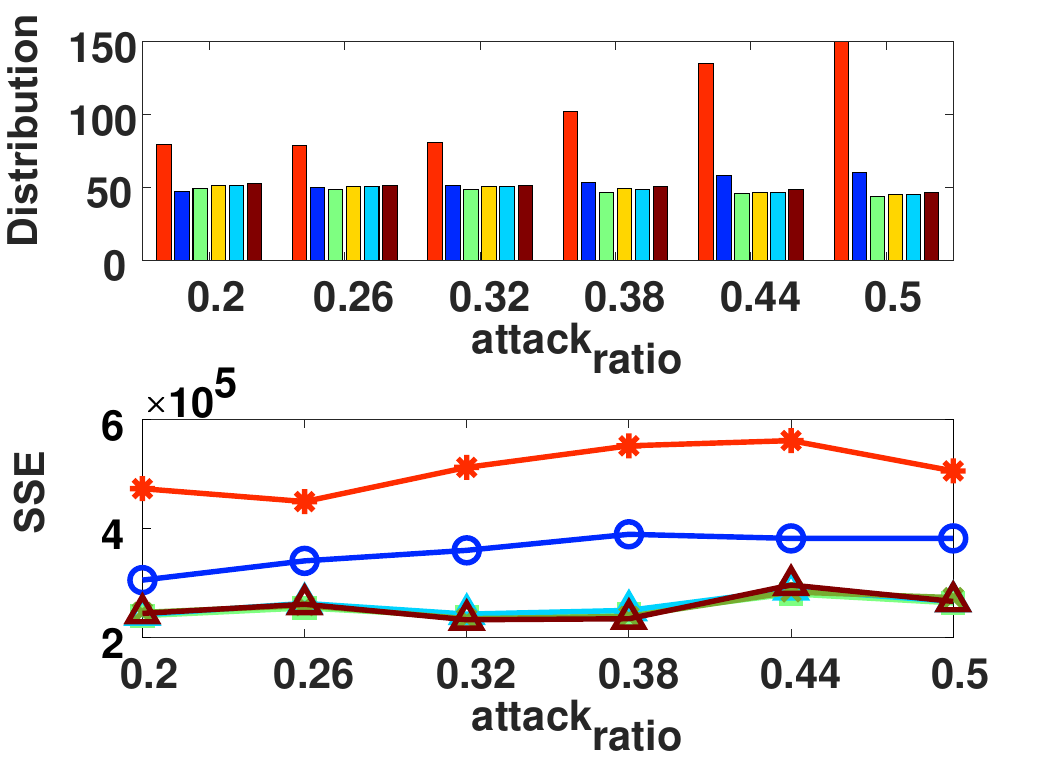}}\hspace{-0.02\textwidth}
    \subfigure[$\textbf{Letter}_{[0.2,0.5]}$]{\includegraphics[width=0.32\textwidth]{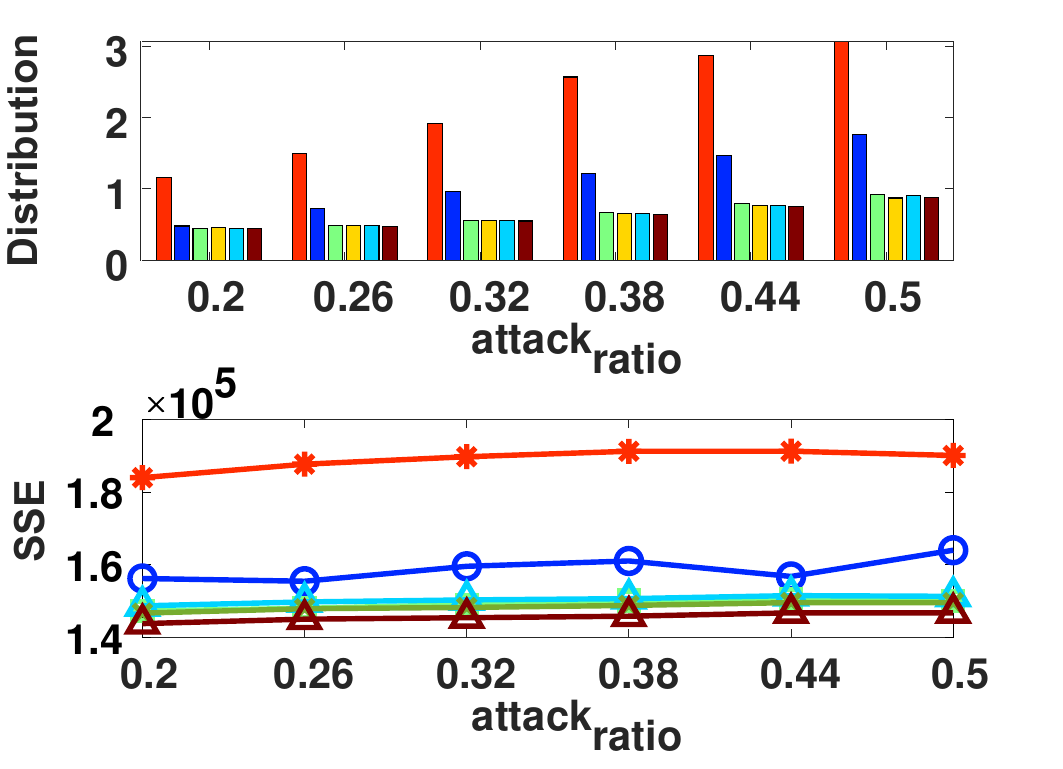}}\hspace{-0.02\textwidth}

    \caption{K-means clustering results over \textbf{Control}, \textbf{Vehicle}, and \textbf{Letter}, Tth=0.9}
    \label{fig:NE090}
    \vspace{-4mm}
\end{figure*}

\begin{figure*}[]
    \centering
        \begin{minipage}[t]{\linewidth}
\centering
\includegraphics[width=0.8\textwidth]{figures/legend.pdf}
\end{minipage}
    \subfigure[$\textbf{Control}_{[0,0.1]}$]{\includegraphics[width=0.32\textwidth]{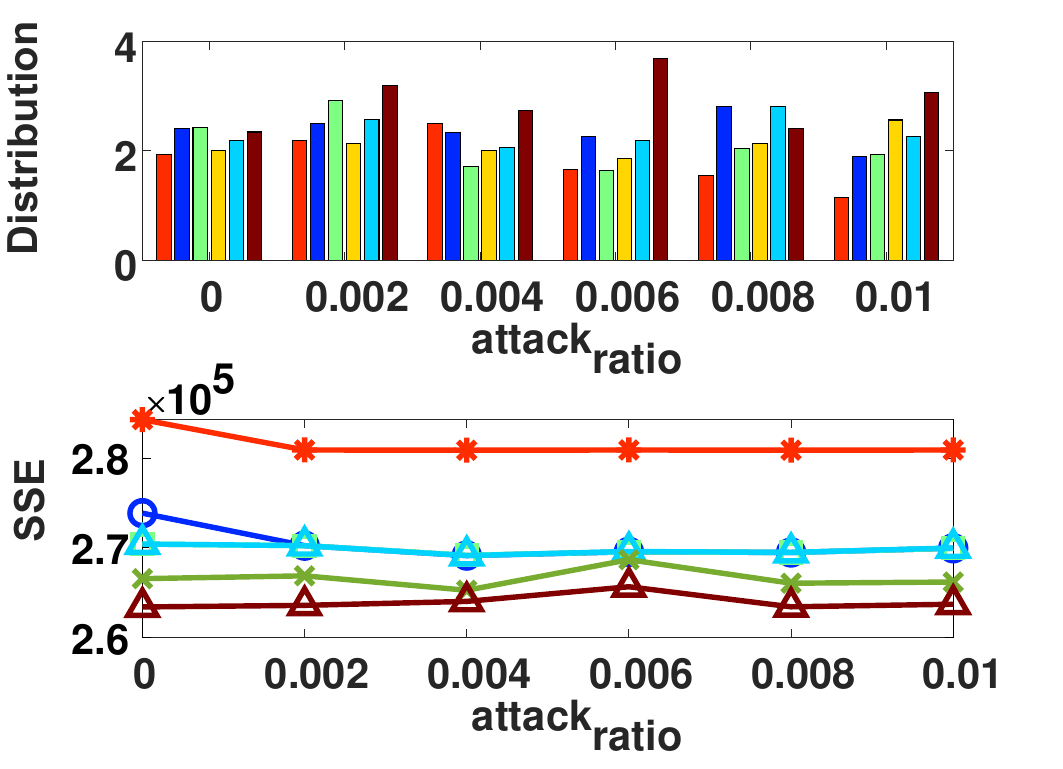}}\hspace{-0.02\textwidth}
    \subfigure[$\textbf{Vehicle}_{[0,0.1]}$]{\includegraphics[width=0.32\textwidth]{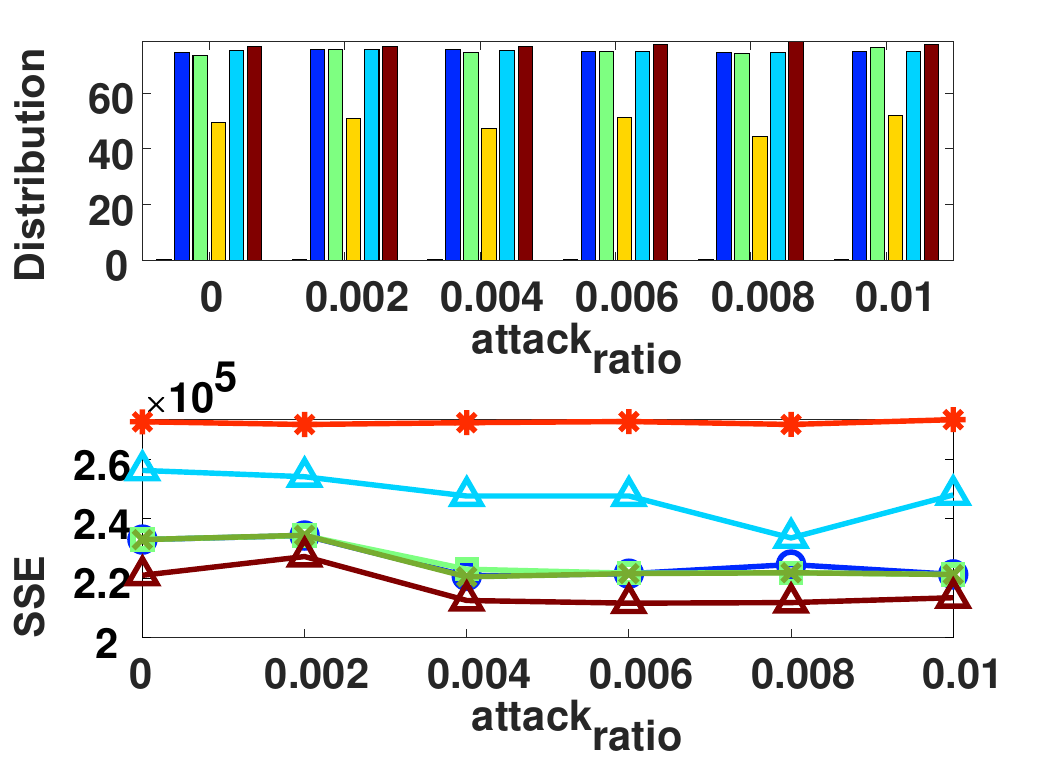}}\hspace{-0.02\textwidth}
    \subfigure[$\textbf{Letter}_{[0,0.1]}$]{\includegraphics[width=0.32\textwidth]{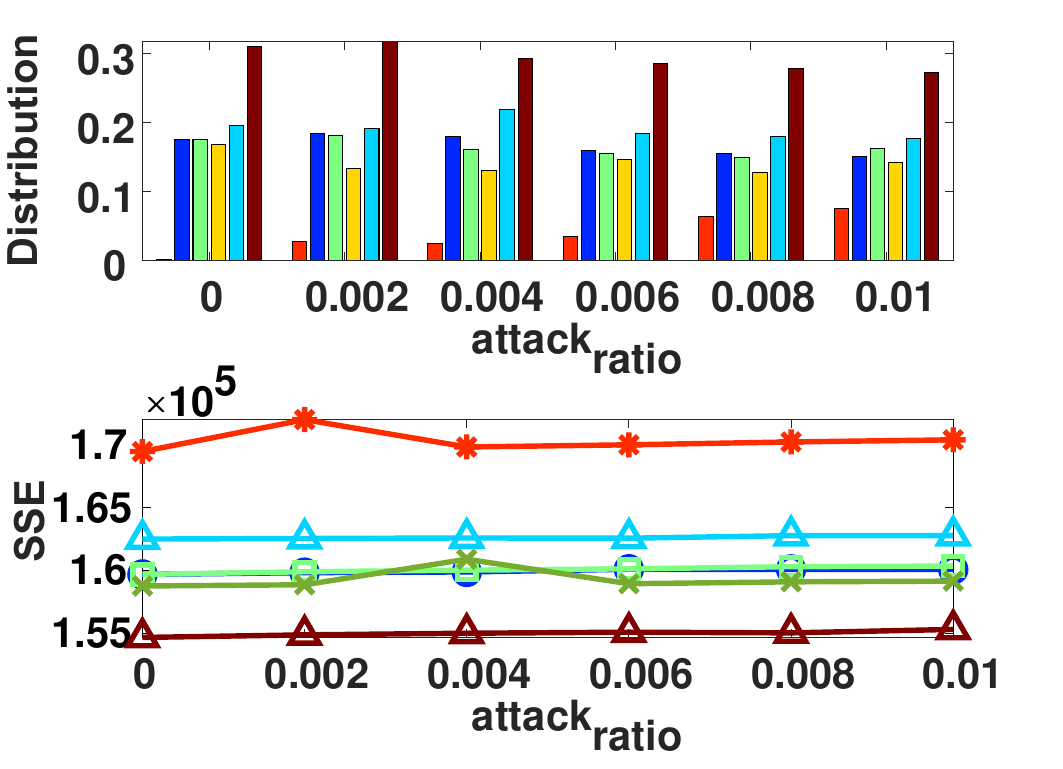}}\hspace{-0.02\textwidth}

    \subfigure[$\textbf{Control}_{[0.05,0.15]}$]{\includegraphics[width=0.32\textwidth]{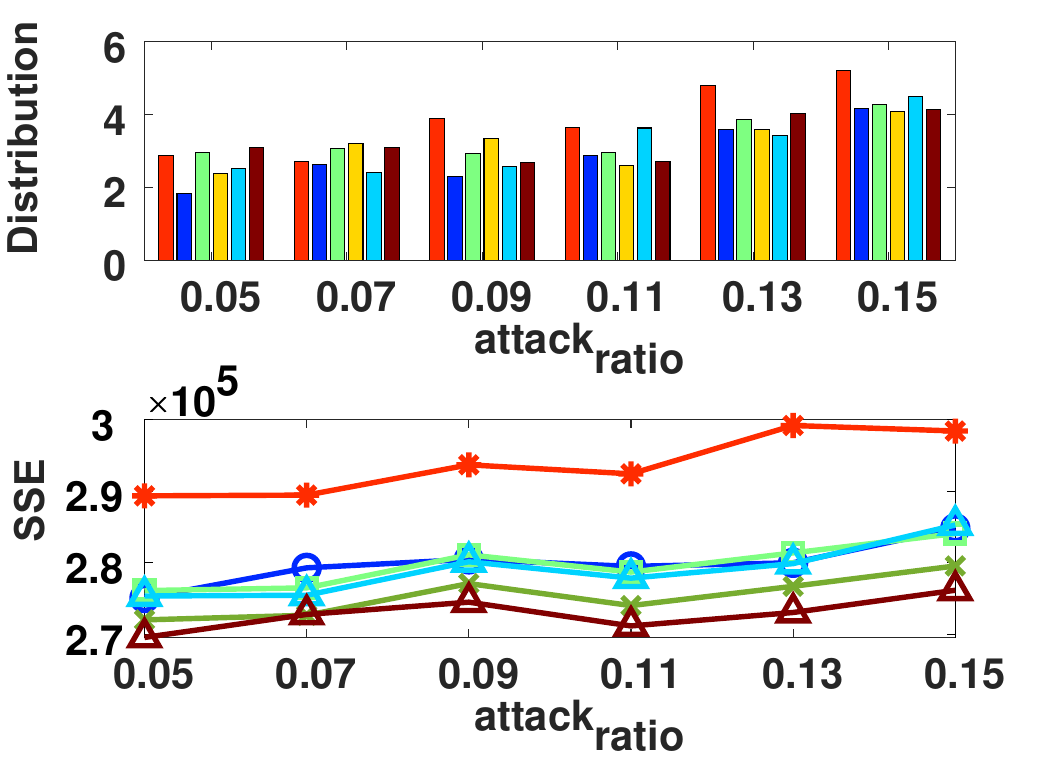}}\hspace{-0.02\textwidth}
    \subfigure[$\textbf{Vehicle}_{[0.05,0.15]}$]{\includegraphics[width=0.32\textwidth]{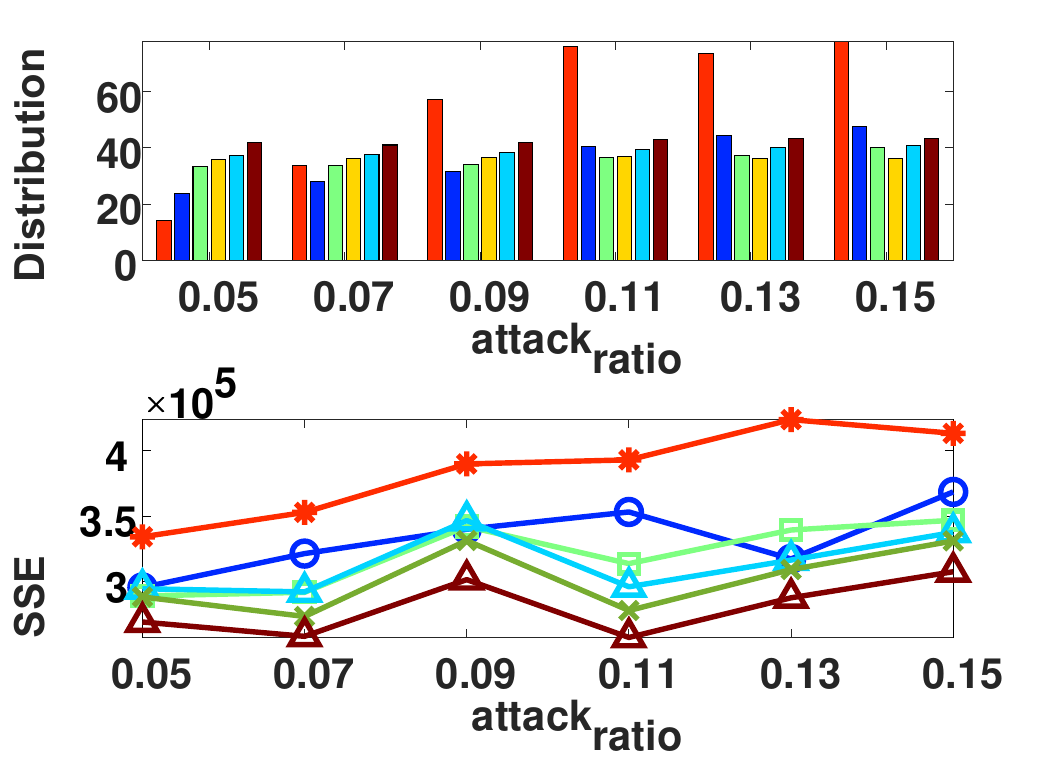}}\hspace{-0.02\textwidth}
    \subfigure[$\textbf{Letter}_{[0.05,0.15]}$]{\includegraphics[width=0.32\textwidth]{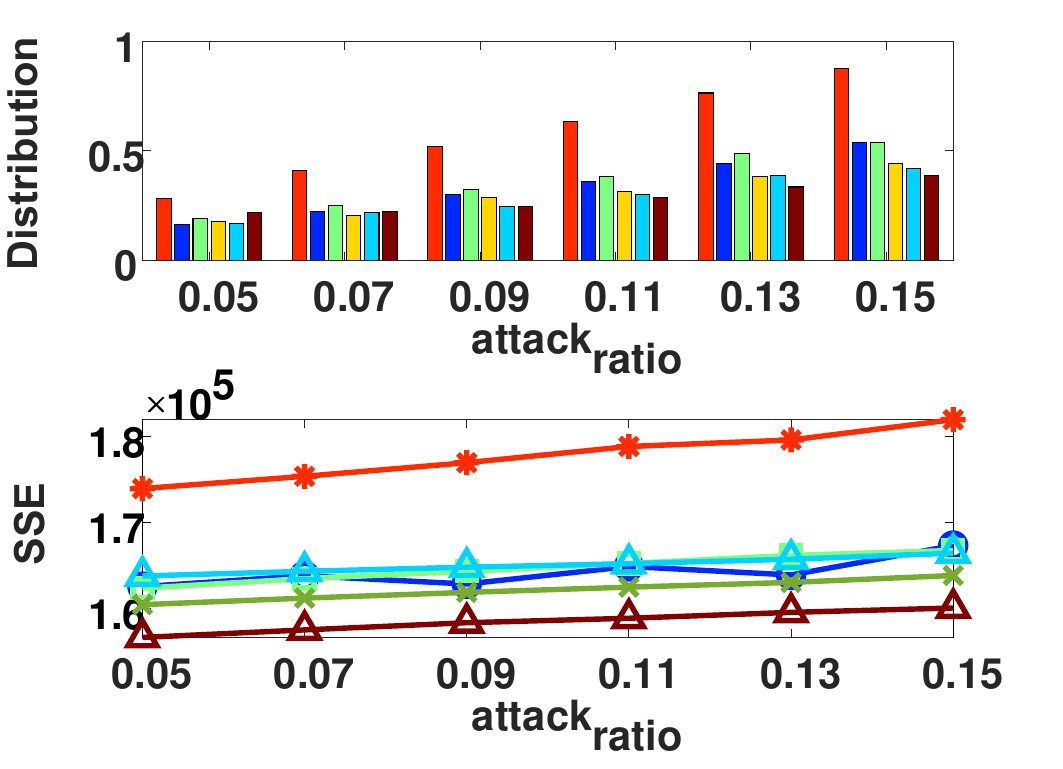}}\hspace{-0.02\textwidth}

    \subfigure[$\textbf{Control}_{[0.2,0.5]}$]{\includegraphics[width=0.32\textwidth]{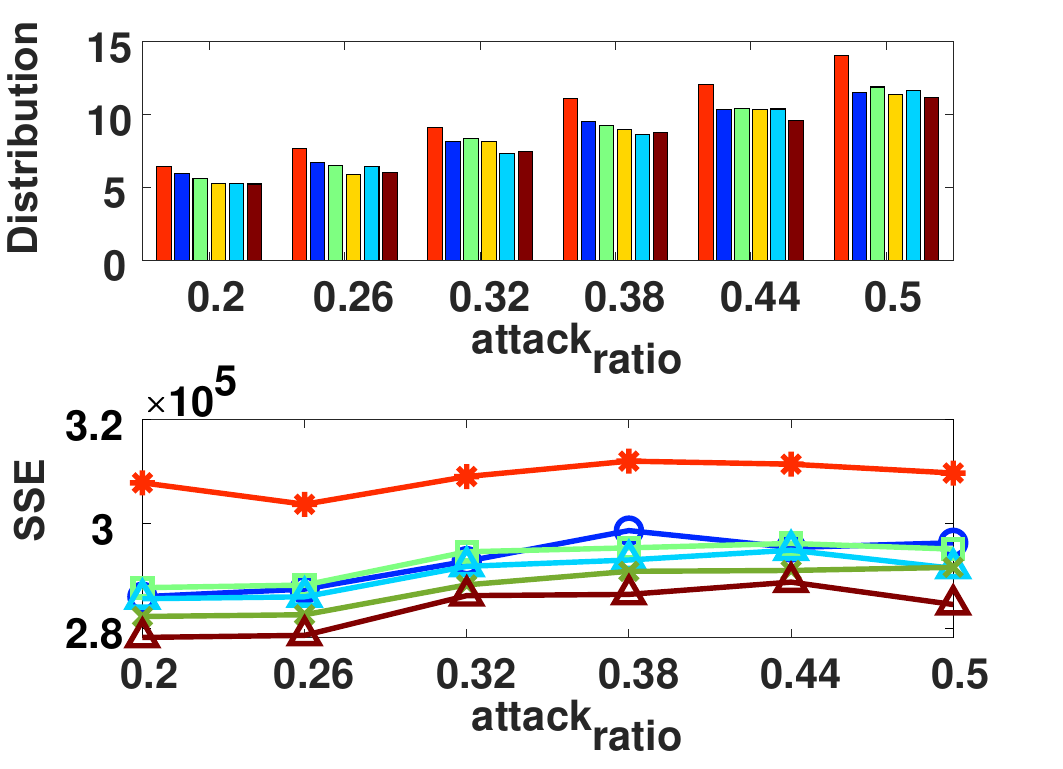}}\hspace{-0.02\textwidth}
    \subfigure[$\textbf{Vehicle}_{[0.2,0.5]}$]{\includegraphics[width=0.32\textwidth]{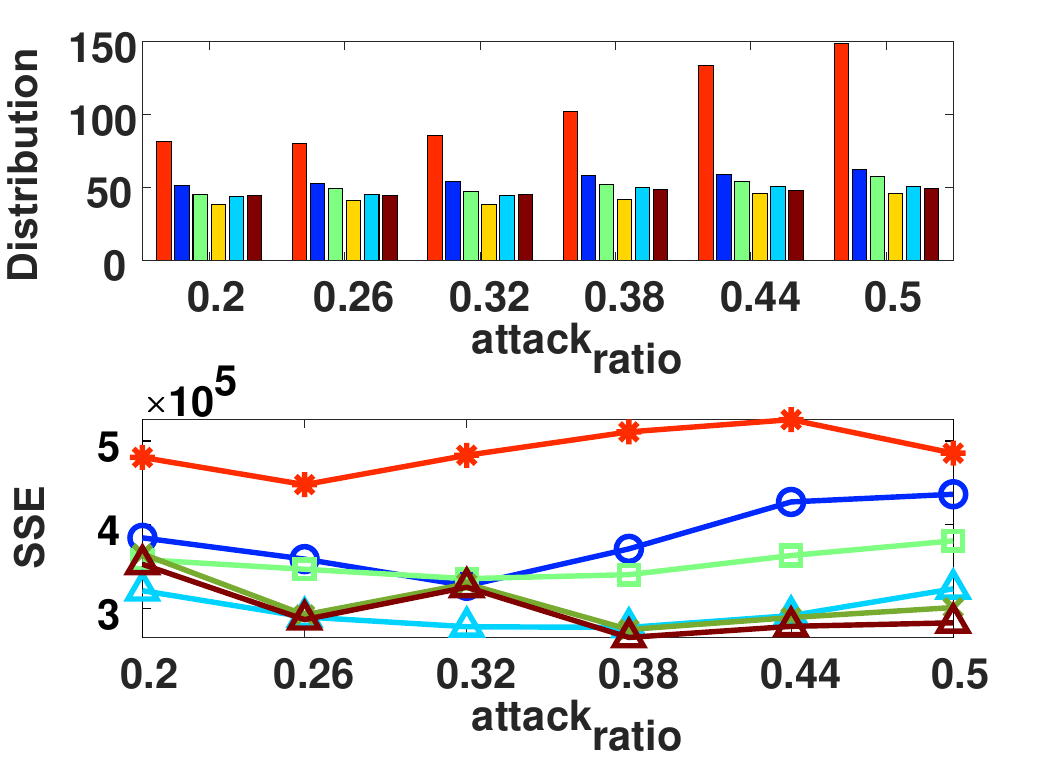}}\hspace{-0.02\textwidth}
    \subfigure[$\textbf{Letter}_{[0.2,0.5]}$]{\includegraphics[width=0.32\textwidth]{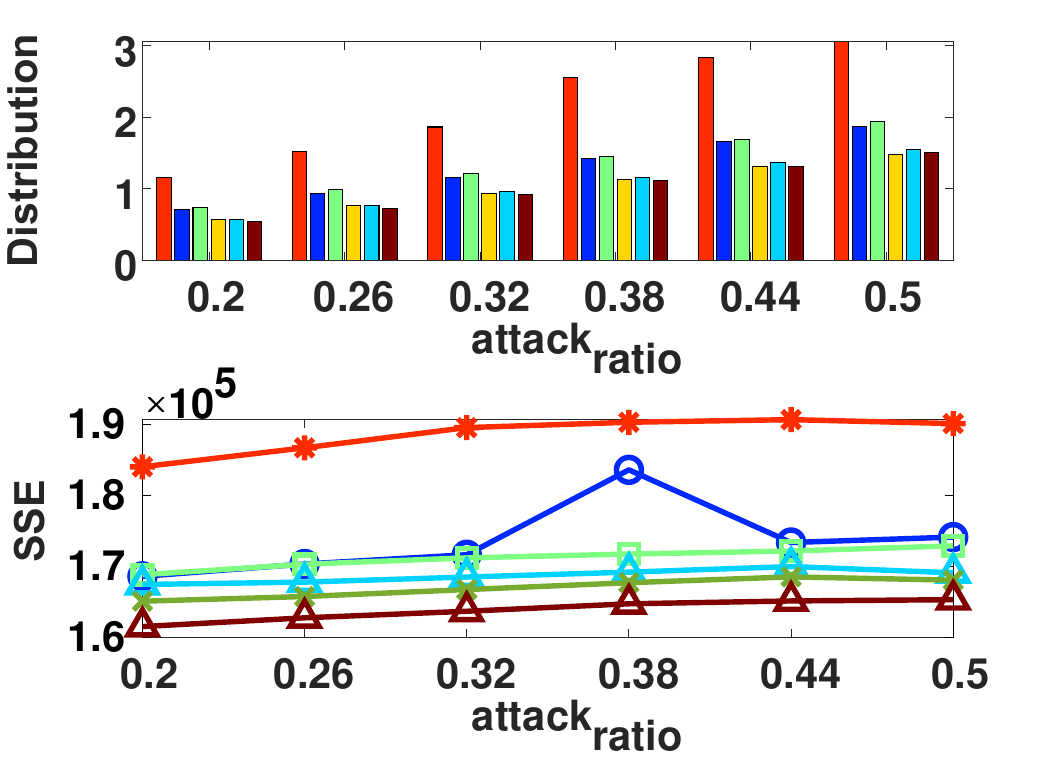}}\hspace{-0.02\textwidth}

    \caption{K-means clustering results over \textbf{Control}, \textbf{Vehicle}, and \textbf{Letter}, Tth=0.97}
    \label{fig:NE097}
    \vspace{-4mm}
\end{figure*}

\subsection{Stackelberg Equilibrium Results on SVM and SOM Classifier}This subsection validates the on labelled datasets with regard to Support Vector Machine (SVM) and Self-Organizing Map (SOM) classifiers, respectively. SVM and SOM are both classifiers included within MATLAB. We process various datasets and use them as inputs, directly showcasing the classification results. Specifically, we set the number of neurons in SOM to $20 \times 20 = 400$. The color depth between adjacent neurons represents their distance, with darker colors signifying greater distances between neurons. All elements classified into the same class have relatively small distances between them.

The SVM experiment is carried out exclusively on \textbf{Control} \textbf{(with labels)}. The parameters are fixed at Tth=0.95 and attack ratio=0.4. Fig. \ref{fig:groundtruth} (a) illustrates the ground truth of SVM classification, while Fig. \ref{fig:SVM} provides a comparison of SVM classification methods. The results are quite clear: the ground truth achieves an average accuracy of 96.8\%, while the various approaches under comparison yield respective accuracies of 95.5\%, 95.1\%, 94.9\%, 96.1\%, 95.6\%, and 95.7\%. The first three strategies comprise Ostrich and two Baselines. It is evident that $Baseline_{static}$ exhibits the poorest performance, even falling behind Ostrich. $Baseline_{0.9}$ also underperforms compared to Ostrich, a consequence of trimming excessive amounts of useful data. Our three approaches outperform others in terms of accuracy.

\begin{figure}
\centering
\subfigure[Ground truth of SVM]{\includegraphics[width=0.22\textwidth]{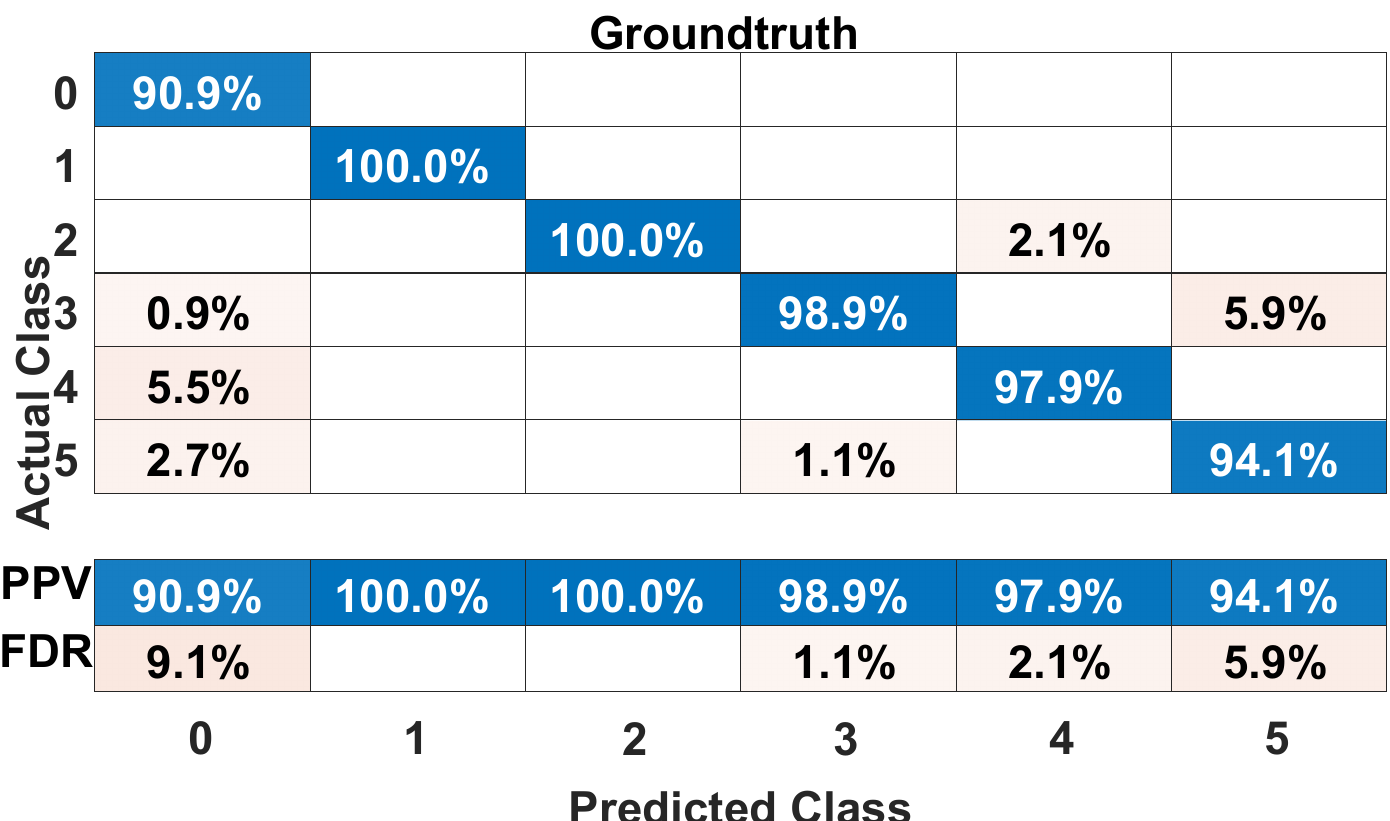}}
\subfigure[Ground truth of SOM]{\includegraphics[width=0.22\textwidth]{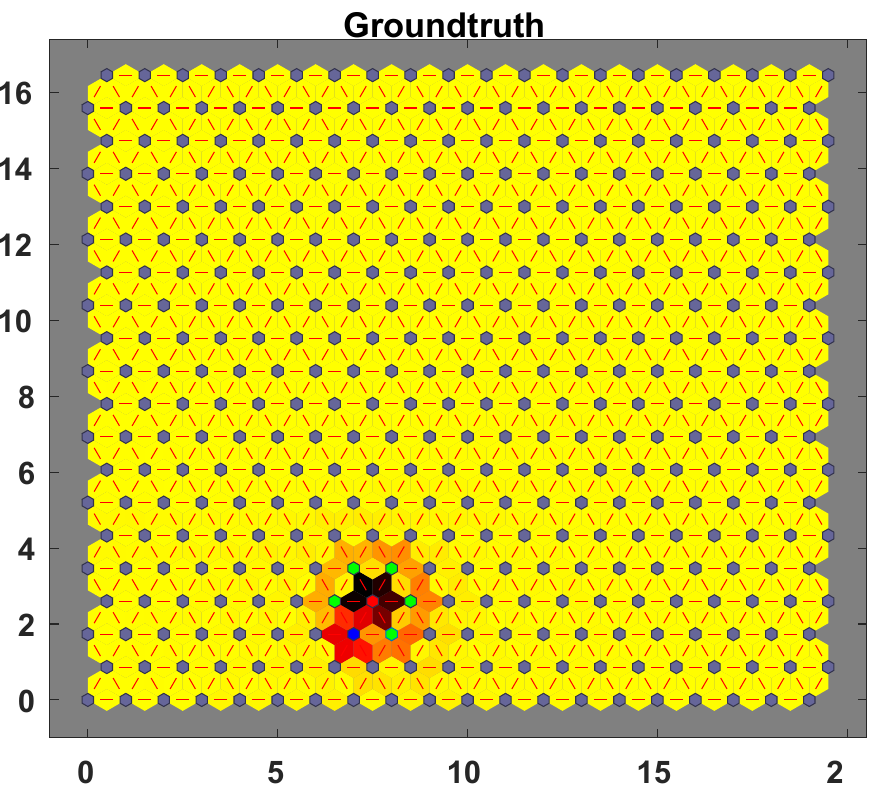}}
\caption{The ground truth of SVM and SOM classification}
\label{fig:groundtruth}
\vspace{-6mm}
\end{figure}

\begin{figure}[]
    \centering

    \subfigure[$Ostrich$]{\includegraphics[width=0.22\textwidth]{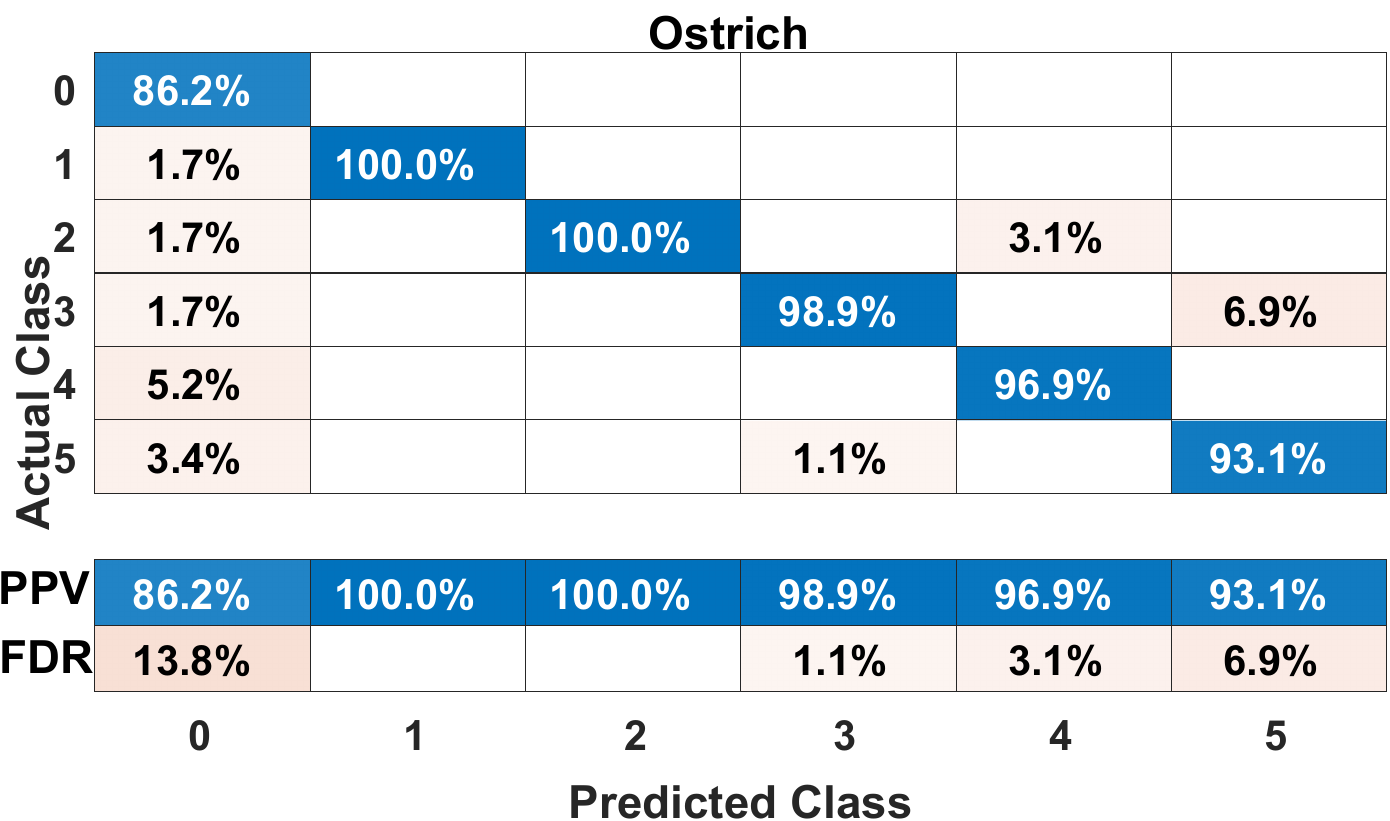}}\hspace{-0.00\textwidth}
    \subfigure[$Baseline_{0.9}$]{\includegraphics[width=0.22\textwidth]{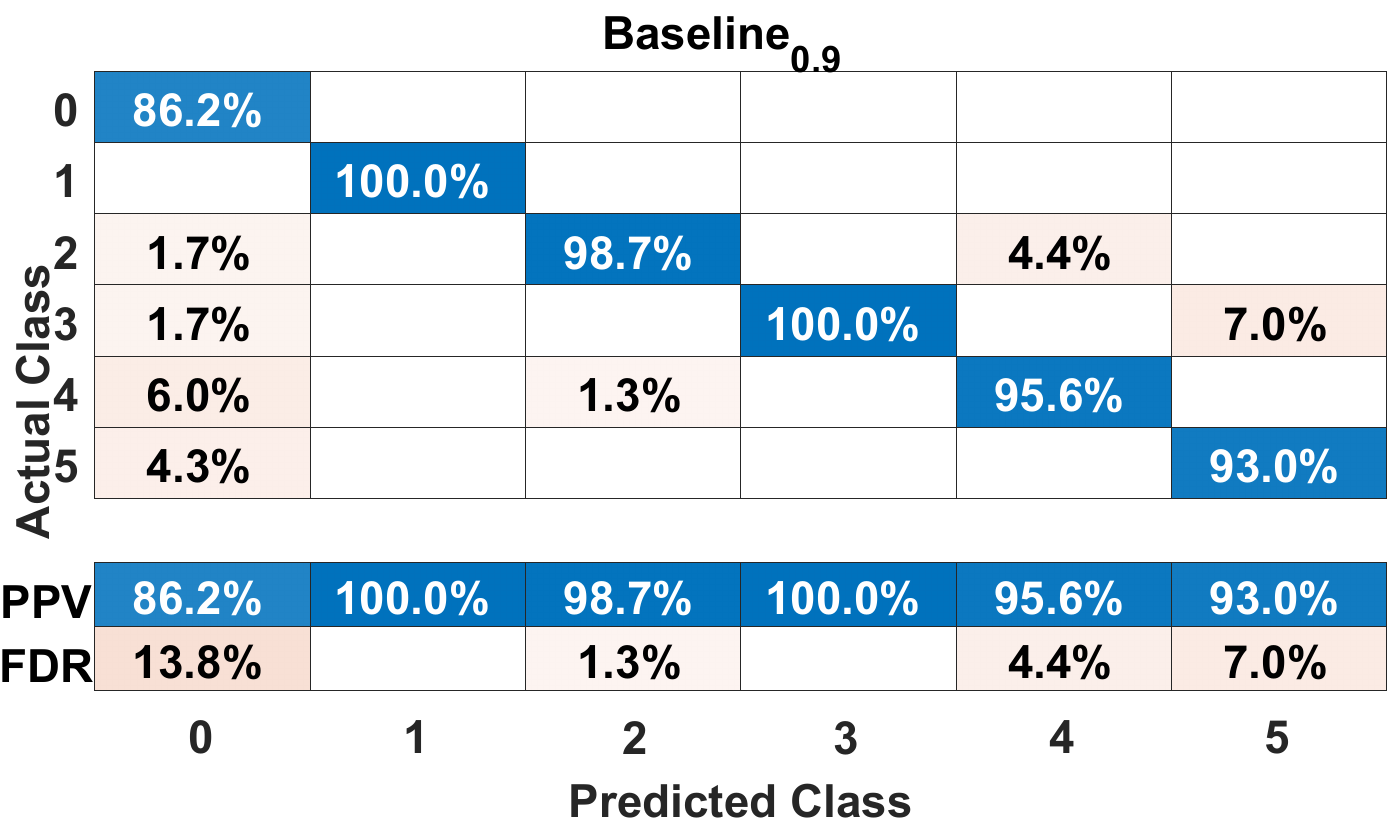}}\hspace{-0.01\textwidth}
    \subfigure[$Baseline_{static}$]{\includegraphics[width=0.22\textwidth]{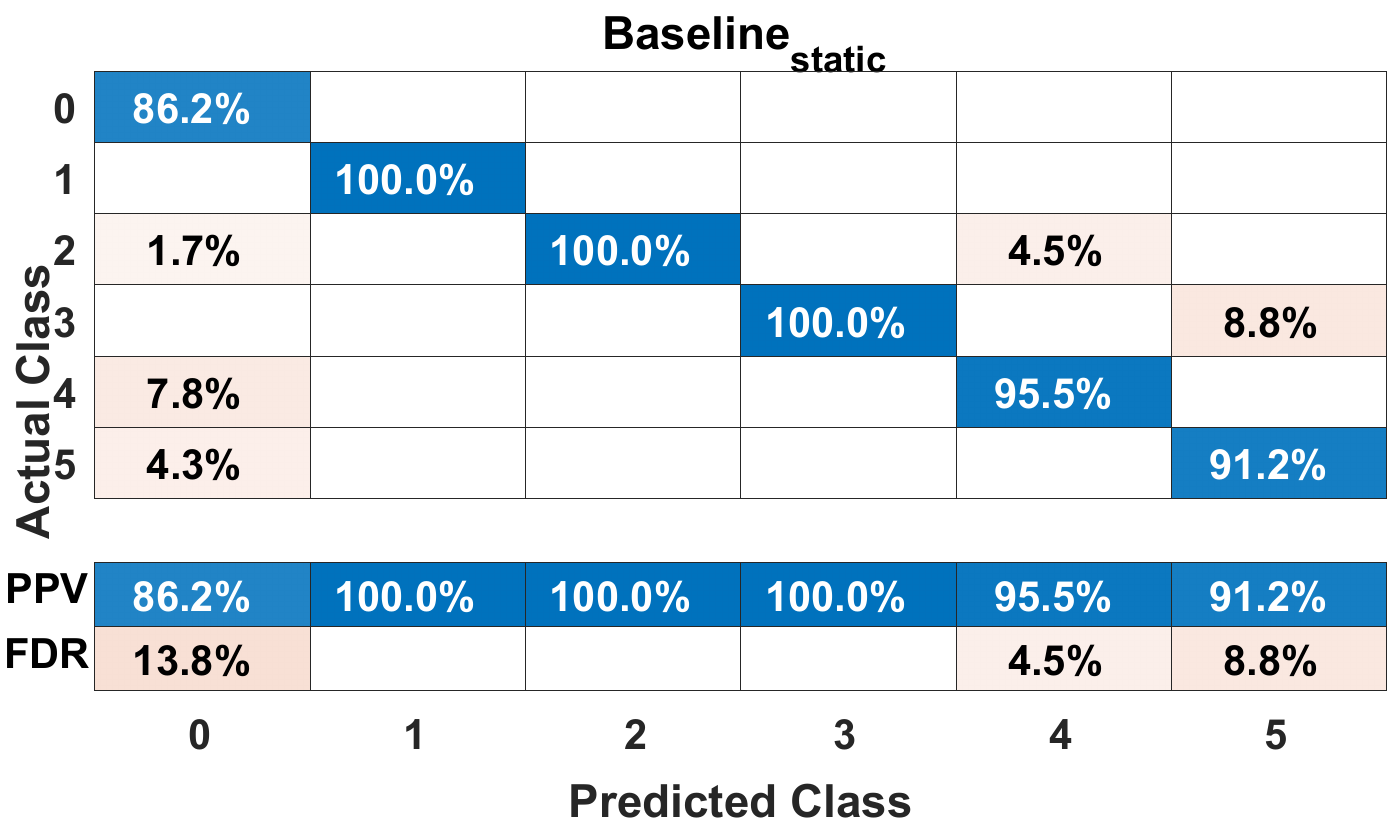}}\hspace{-0.00\textwidth}
    \subfigure[$Titfortat$]{\includegraphics[width=0.22\textwidth]{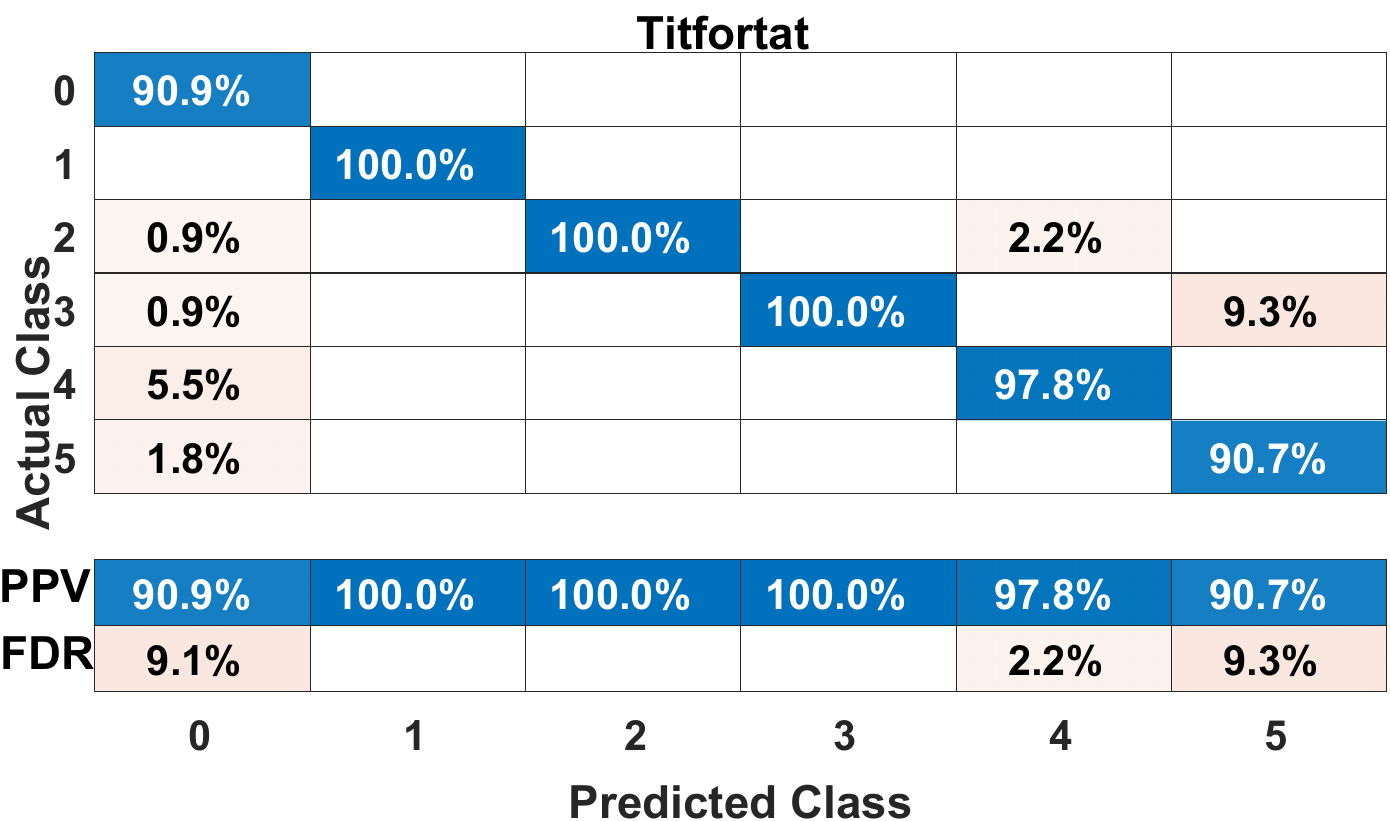}}\hspace{-0.01\textwidth}
    \subfigure[$Elastic_{0.1}$]{\includegraphics[width=0.22\textwidth]{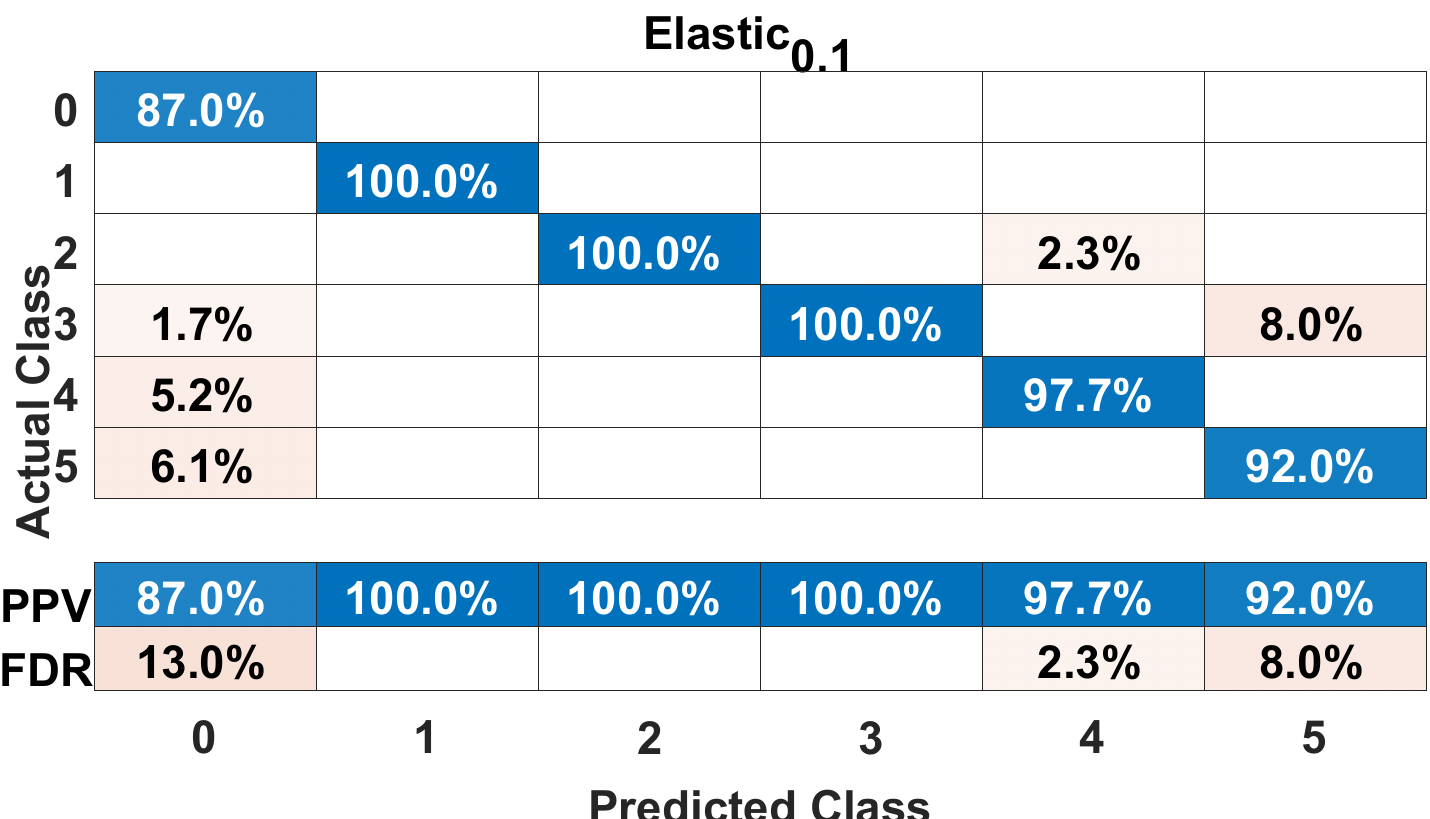}}\hspace{-0.00\textwidth}
    \subfigure[$Elastic_{0.5}$]{\includegraphics[width=0.22\textwidth]{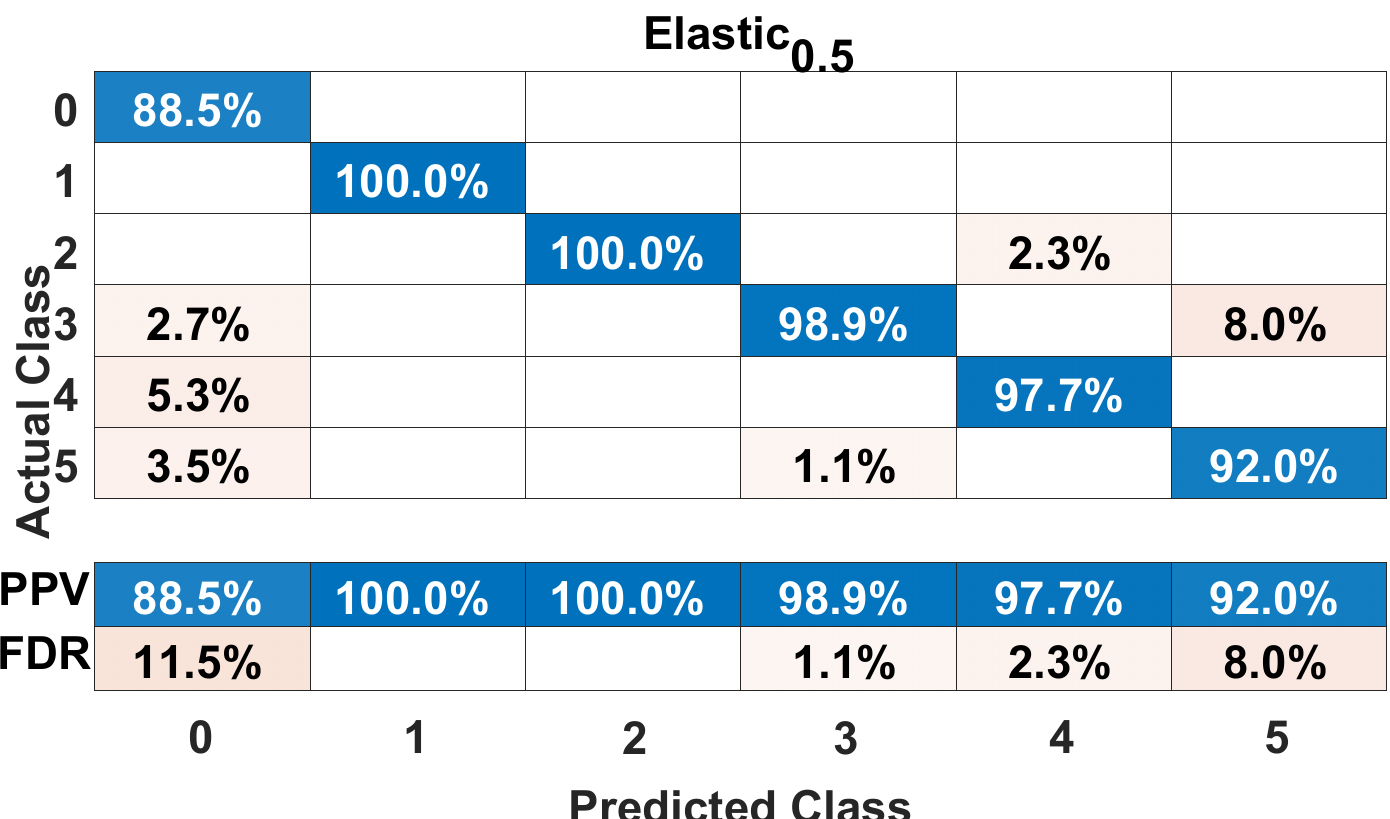}}
    \caption{Comparison of SVM classification}
    \label{fig:SVM}
    \vspace{-3mm}
\end{figure}

We carry out SOM classification on \textbf{Creditcard}, which contains credit card consumption data. The ground truth classification of this dataset, divided into four classes, is depicted in Fig. \ref{fig:groundtruth} (b). The classification results exhibit significant skewness and can be interpreted as follows:

The vast majority of data points belong to the same class, signifying the general public. The two isolated points, colored red and blue, are notably distant from other classes, representing fraudulent and premium users, respectively. The figure also includes five green points that symbolize a distinct category. These points are distant from both fraudulent and premium users, so they exhibit behaviors different from the general public. We can reasonably infer that these data points represent a segment of the general public with potential to evolve into high-value customers over time.

Fig. \ref{fig:SOM} presents a comparison of SOM classification results. We observe that Ostrich entirely disregards the large class corresponding to the green points. $Baseline_{0.9}$ performs worse than Ostrich, as it not only failed to differentiate the class corresponding to the green points but also lost the unique characteristics of the two isolated smaller classes. Though $Baseline_{static}$ successfully divides the data into four classes, it only includes a single isolated point, and the other three classes are overrepresented. Titfortat omits one isolated point and expands the area of the original green class. $Elastic_{0.1}$ and $Elastic_{0.5}$ each drop one isolated point but effectively represent the unique characteristics of the original class corresponding to the green points.

\begin{figure}[]
    \centering
    \subfigure[$Ostrich$]{\includegraphics[width=0.24\textwidth]{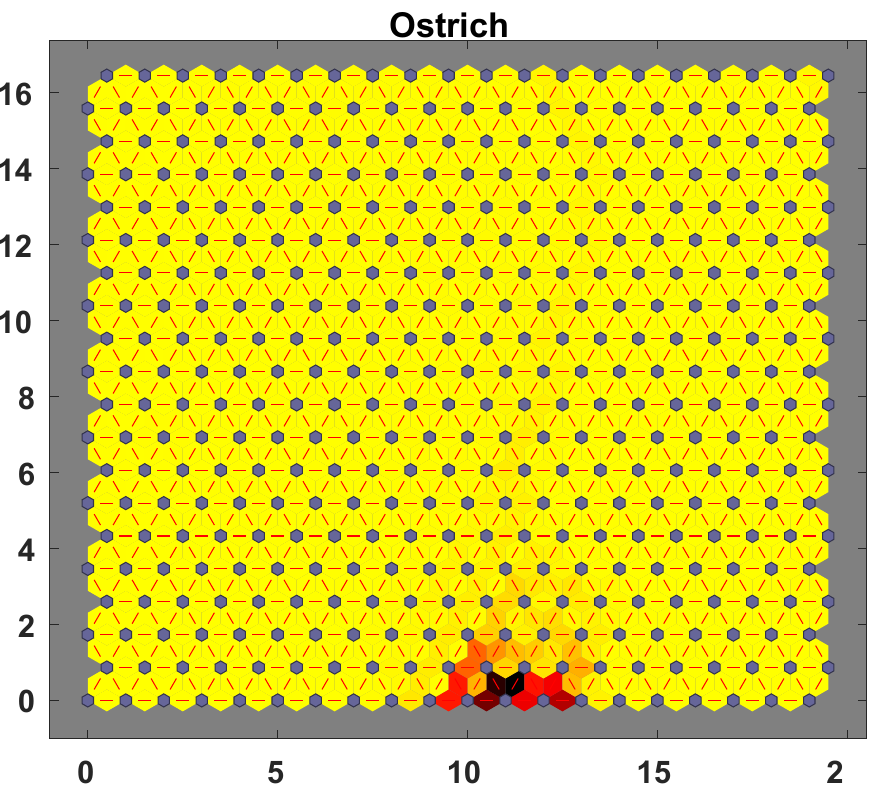}}\hspace{-0.01\textwidth}
    \subfigure[$Baseline_{0.9}$]{\includegraphics[width=0.24\textwidth]{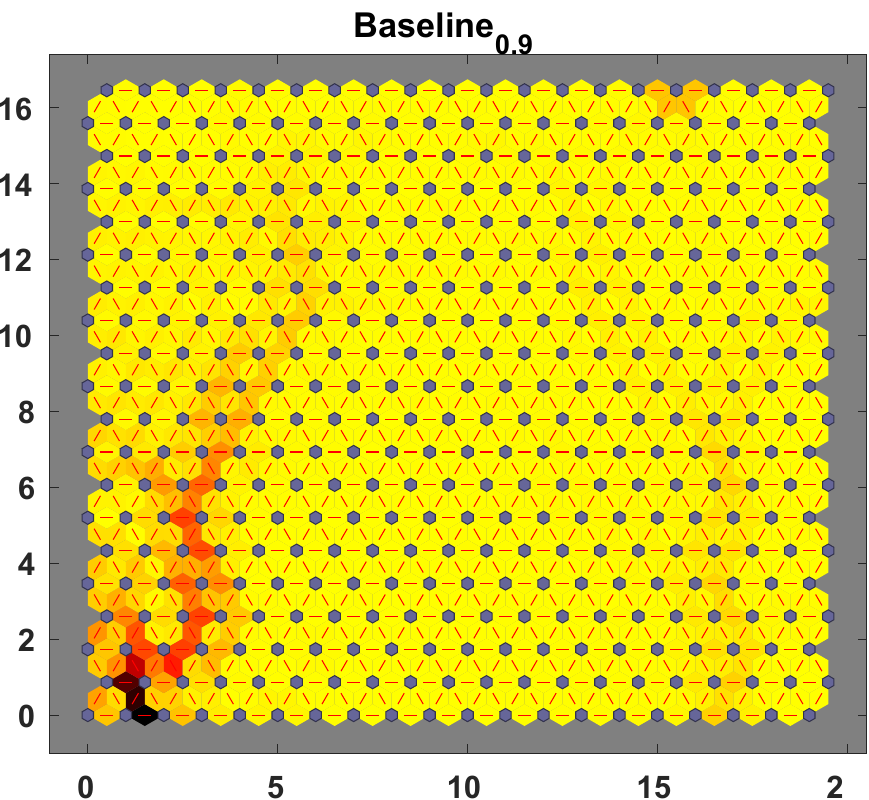}}\hspace{-0.01\textwidth}
    \subfigure[$Baseline_{static}$]{\includegraphics[width=0.24\textwidth]{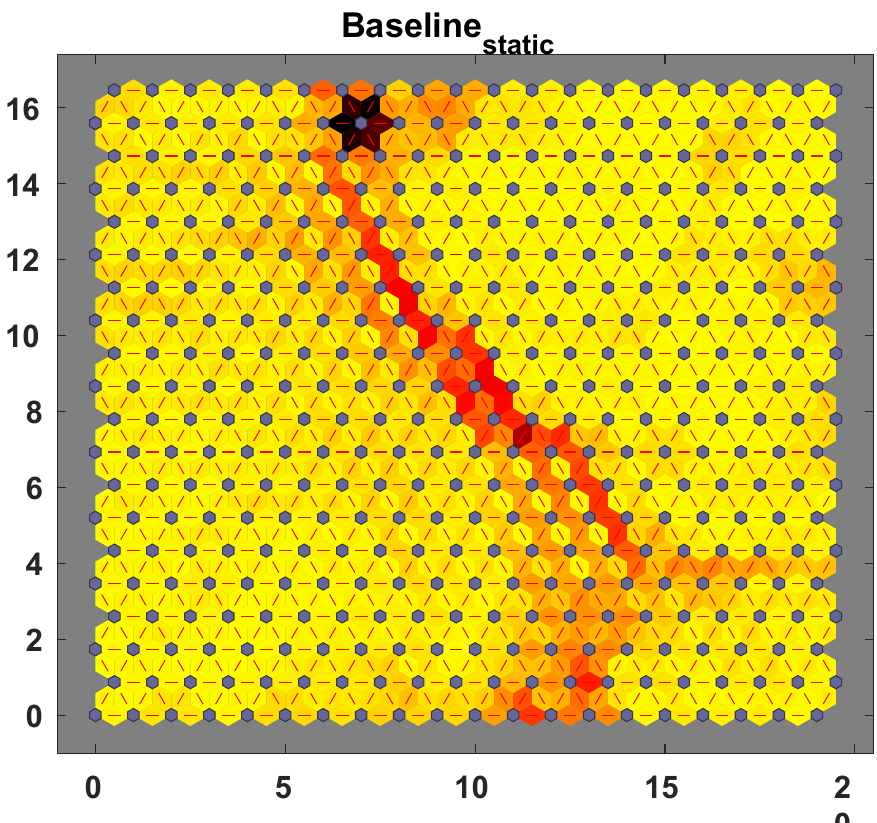}}\hspace{-0.01\textwidth}
    \subfigure[$Titfortat$]{\includegraphics[width=0.24\textwidth]{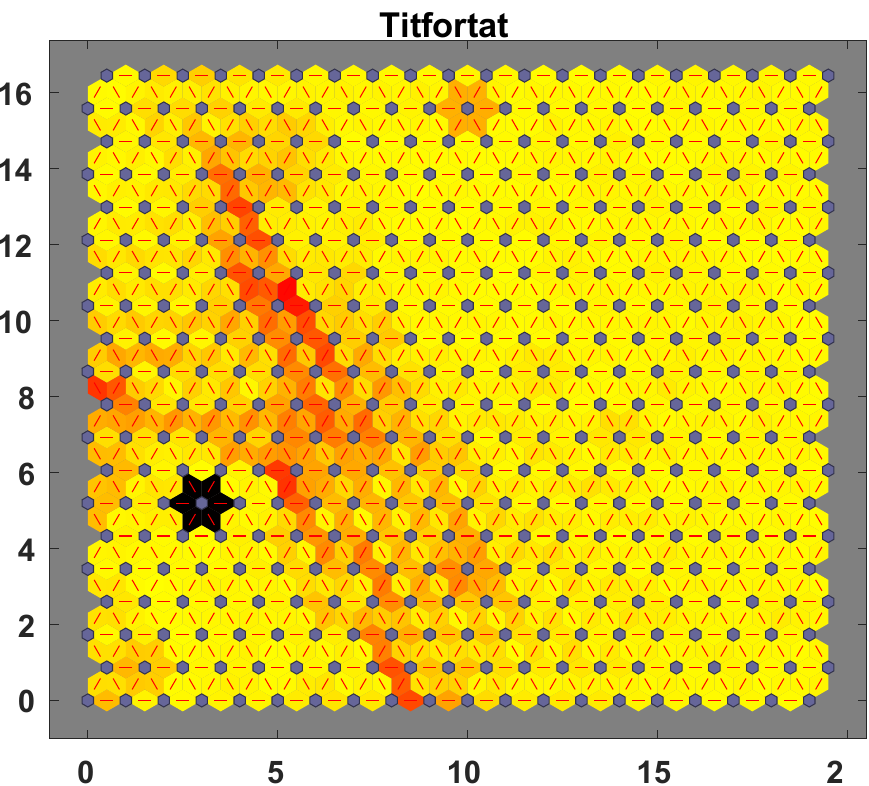}}\hspace{-0.01\textwidth}
    \subfigure[$Elastic_{0.1}$]{\includegraphics[width=0.24\textwidth]{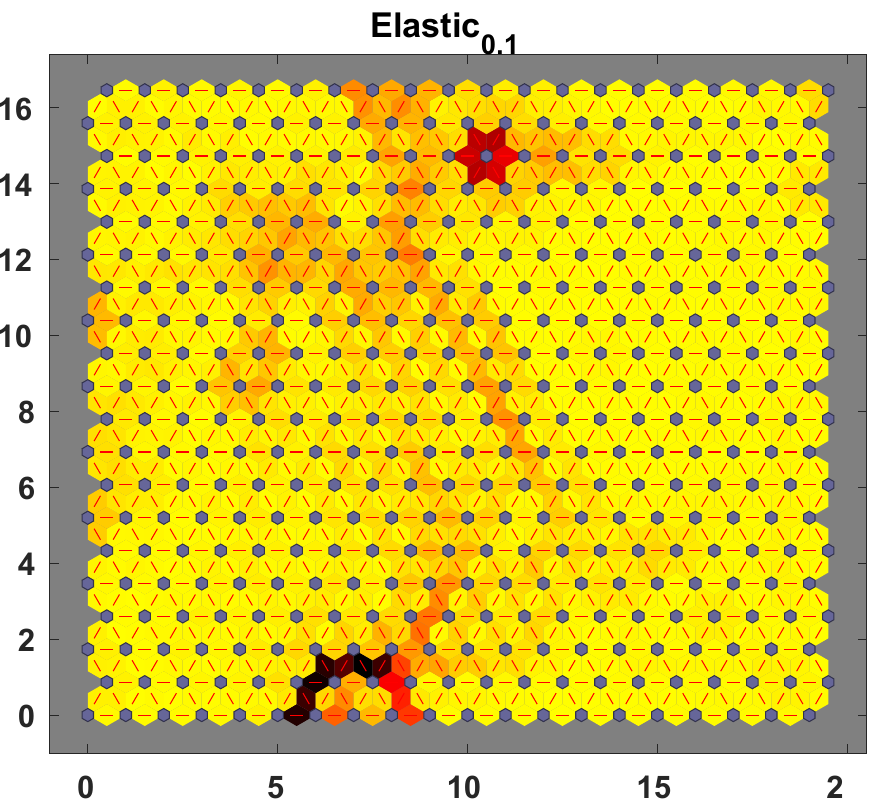}}\hspace{-0.01\textwidth}
    \subfigure[$Elastic_{0.5}$]{\includegraphics[width=0.24\textwidth]{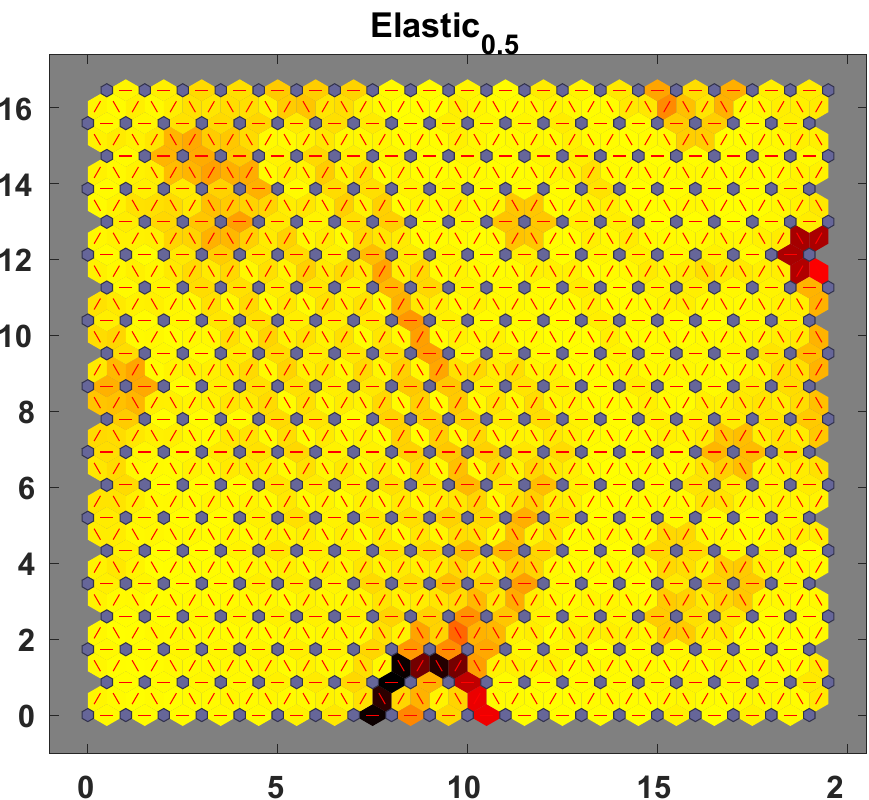}}
    \caption{Comparison of SOM classification}
    \label{fig:SOM}
    \vspace{-6mm}
\end{figure}

\subsection{Non Equilibrium Results and Cost Analysis}This subsection evaluates the utility under conditions where the adversary opts not to follow Stackelberg equilibrium strategies. The experiment is conducted on \textbf{Control} with attack ratio 0.2, involving 20 rounds of games. Given that all strategies can be represented as mixed strategies comprised of two linear combinations, we establish the 99th and 90th percentiles as the bases for these combinations, manipulated by parameter $p$. Poison values are injected at the 99th percentile with a probability of $p$ and at the 90th percentile with a probability of $1-p$.

In order to examine the early termination of Titfortat, we allow a redundancy of 5\%. This means that the stopping trigger condition is set to the initial observation where the ratio of poison values in a round exceeds $1-p+0.05$. Once this condition is triggered, the trimming position of Titfortat in subsequent rounds is permanently shifted to the 90th percentile. When $p=1$, it corresponds to an adversary who consistently adheres to the Stackelberg equilibrium strategy, while $p=0$ corresponds to an adversary who is both greedy and shortsighted. The effectiveness of any evasion strategy falls between these two extremes, controlled by parameter $p$. The experimental results are summarized in Table \ref{tab:NonNE}. The figures under the Titfortat and Elastic columns represent the proportion of untrimmed poison values in the remaining data, while the Average Termination Rounds denote the mean number of rounds the Titfortat strategy underwent before termination. The results suggest that an adversary following the Stackelberg equilibrium strategy realizes higher utility than one who does not.

We also conduct a cost analysis for the Elastic scheme, and the experimental results are presented in Table~\ref{tab:Elasticcost}. In the Elastic scheme, the handling of poison values involves imposing a penalty to the trimming threshold of the subsequent round based on a specified intensity and thus more rounds are required to achieve an equilibrium state. We define the cost of the Elastic scheme as the difference between the percentile of the data collector's soft trim and the actual percentile of the injected poison value before reaching equilibrium. The results displayed in the table represent the roundwise cost, which is the average cost over all rounds. As expected, the cost is higher in the initial rounds. However, as the Elastic strategy progressively adjusts the trimming threshold, the attacker's poison placement gradually approaches the equilibrium point, and the cost per round decreases accordingly. Hence, the roundwise cost diminishes with an increasing number of rounds, denoted by \(Round\_no\). Additionally, because the response intensity at \(k=0.5\) is greater than at \(k=0.1\), the former achieves equilibrium more rapidly, resulting in a lower roundwise cost.

\begin{table}
  \centering
  \caption{Non-equilibrium results and average termination rounds}
    \begin{tabular}{|c|c|c|c|}
    \hline
    $p$ & \multicolumn{1}{c|}{Average termination rounds} & \multicolumn{1}{c|}{Titfortat} & \multicolumn{1}{c|}{Elastic} \\
    \hline
    0     & 25    & 0.22727 & 0.22727 \\
    0.1   & 24.24 & 0.19157 & 0.22309 \\
    0.2   & 21.56 & 0.19645 & 0.21844 \\
    0.3   & 23.44 & 0.19264 & 0.21232 \\
    0.4   & 19.44 & 0.18381 & 0.20924 \\
    0.5   & 20.6  & 0.17904 & 0.20483 \\
    0.6   & 17.52 & 0.17363 & 0.19017 \\
    0.7   & 14.44 & 0.16874 & 0.17114 \\
    0.8   & 16.52 & 0.17011 & 0.15952 \\
    0.9   & 14.28 & 0.17041 & 0.15036 \\
    1     & 13    & 0.18182 & 0.14449 \\
    \hline
    \end{tabular}%
  \label{tab:NonNE}%
  \vspace{-6mm}
\end{table}%

\begin{table}[h]
\centering
\caption{Roundwise cost of $Elastic_{0.1}$ and $Elastic_{0.5}$}
\begin{tabular}{|c|c|c|}
\hline
Round\_no & k=0.5 (\%) & k=0.1 (\%) \\
\hline
5 & 0.608\% & 0.8\% \\
10 & 0.30404\% & 0.43281\% \\
15 & 0.20269\% & 0.28887\% \\
20 & 0.15202\% & 0.21667\% \\
25 & 0.12162\% & 0.17333\% \\
30 & 0.10135\% & 0.14444\% \\
35 & 0.086869\% & 0.12381\% \\
40 & 0.07601\% & 0.10833\% \\
45 & 0.067565\% & 0.096296\% \\
50 & 0.060808\% & 0.086667\% \\
\hline
\end{tabular}
\label{tab:Elasticcost}
\vspace{-6mm}
\end{table}

\subsection{Performance under LDP perturbations}This subsection evaluates the effectiveness of our proposed approach in privacy protection scenarios where data are perturbed by LDP techniques. For comparison, we use the Expectation-Maximization Filter (EMF)~\cite{du2023differential} as a baseline. It serves as a filtering mechanism designed to mitigate the effect of poison values under LDP data collection. The experiment is conducted on \textbf{Taxi}, the same dataset used in~\cite{du2023differential}. For the adversary of EMF, we employ the input manipulation attack~\cite{cheu2021manipulation}, which is identified as a potent evasion strategy against detection mechanisms within LDP-driven data collection scenarios.

\begin{figure*}[]
    \centering
    \subfigure[Attack ratio=0.05]{\includegraphics[width=0.32\textwidth]{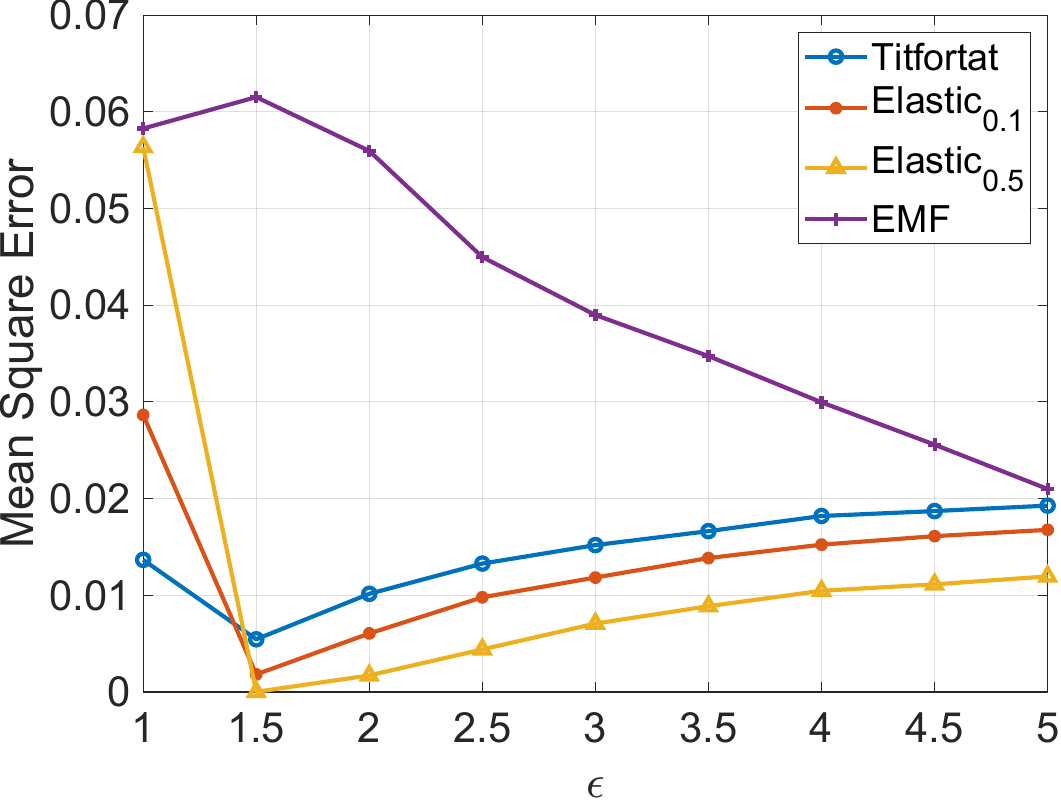}}
    \subfigure[Attack ratio=0.1]{\includegraphics[width=0.32\textwidth]{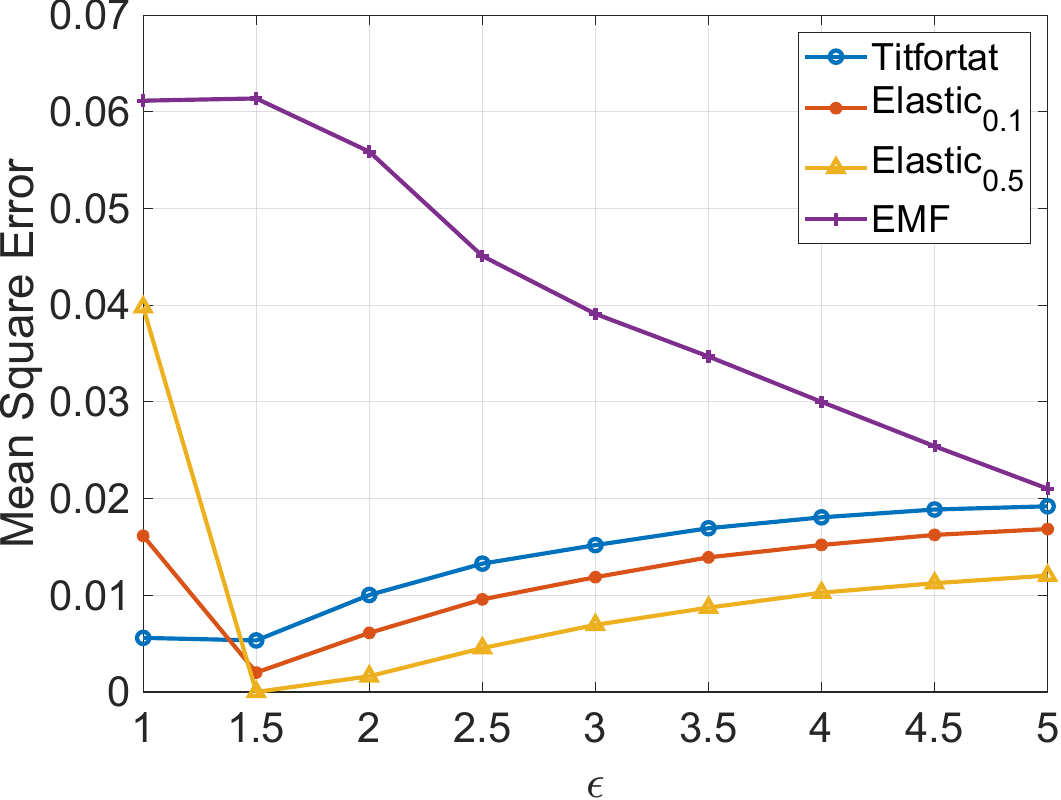}}
    \subfigure[Attack ratio=0.15]{\includegraphics[width=0.32\textwidth]{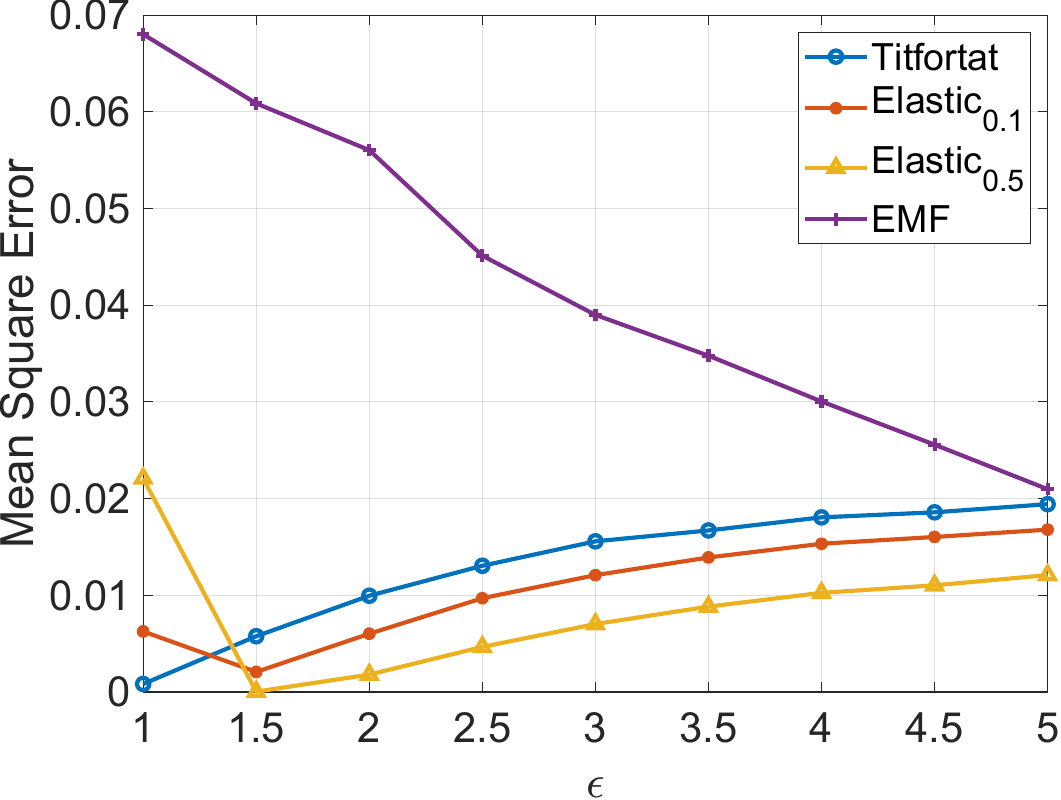}}

    \subfigure[Attack ratio=0.2]{\includegraphics[width=0.32\textwidth]{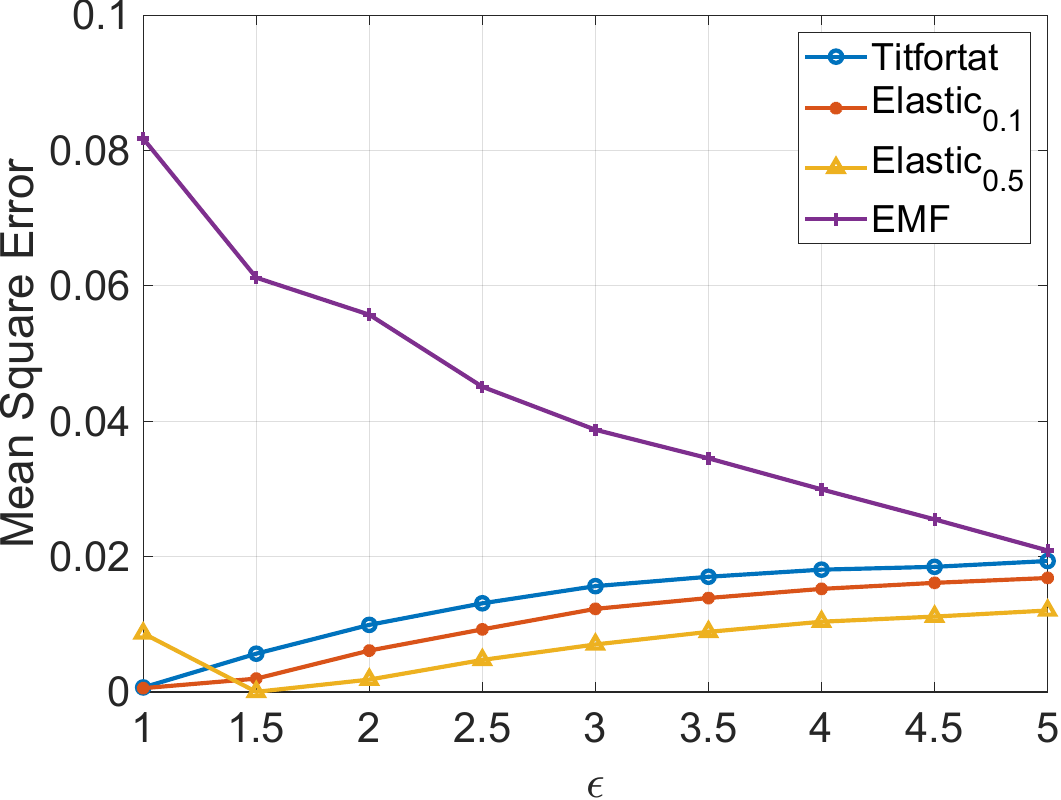}}
    \subfigure[Attack ratio=0.25]{\includegraphics[width=0.32\textwidth]{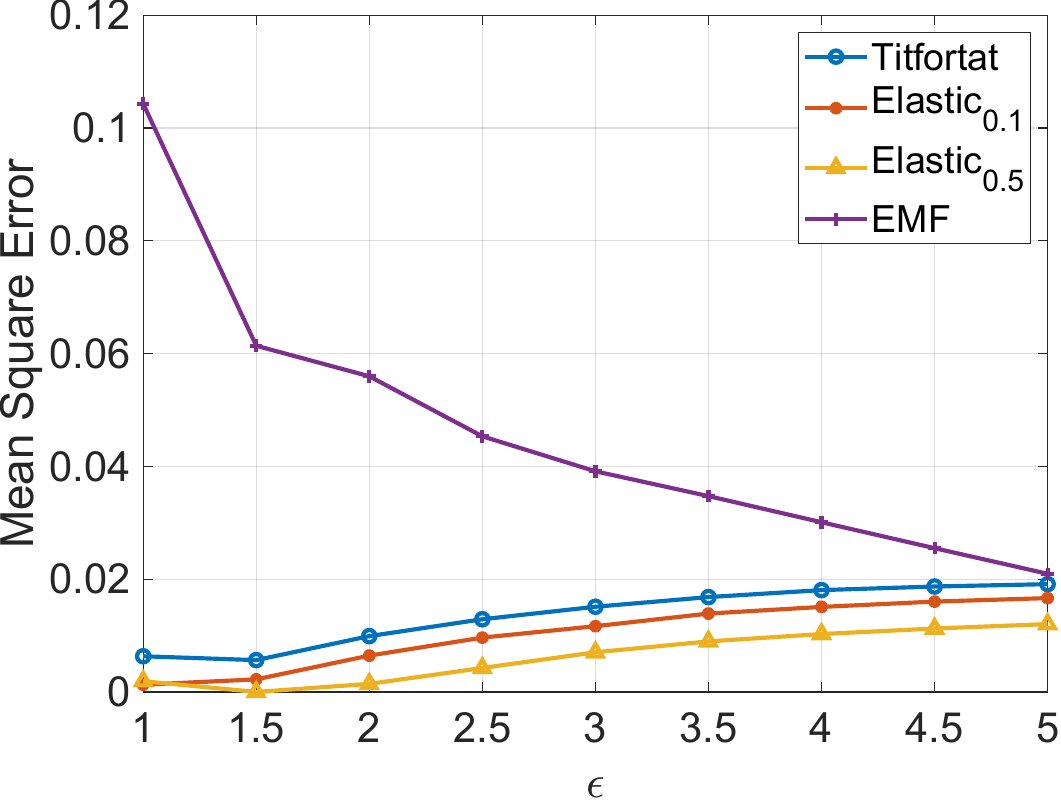}}
    \subfigure[Attack ratio=0.3]{\includegraphics[width=0.32\textwidth]{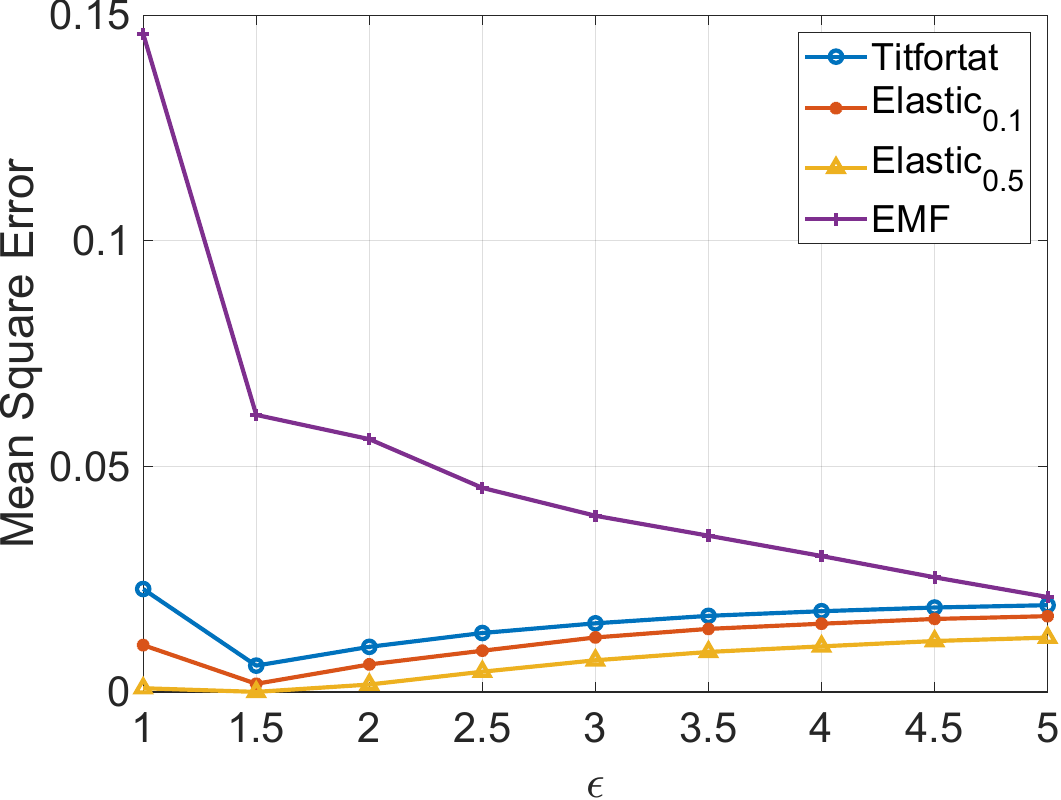}}

    \subfigure[Attack ratio=0.35]{\includegraphics[width=0.32\textwidth]{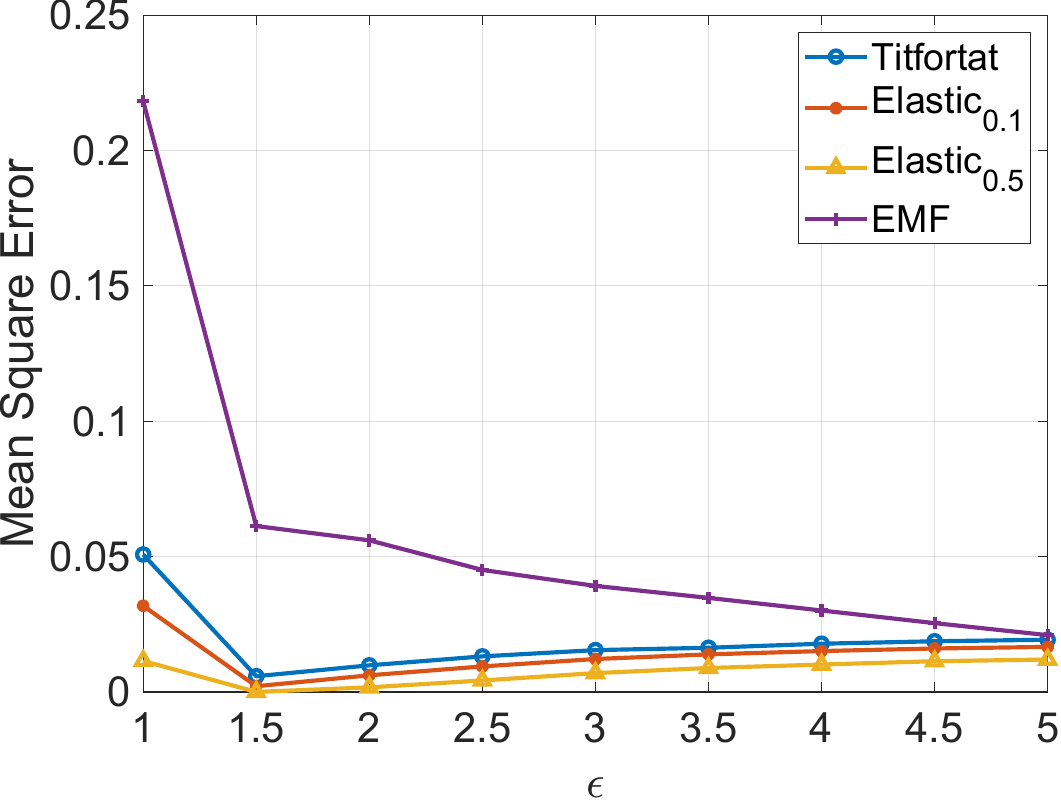}}
    \subfigure[Attack ratio=0.4]{\includegraphics[width=0.32\textwidth]{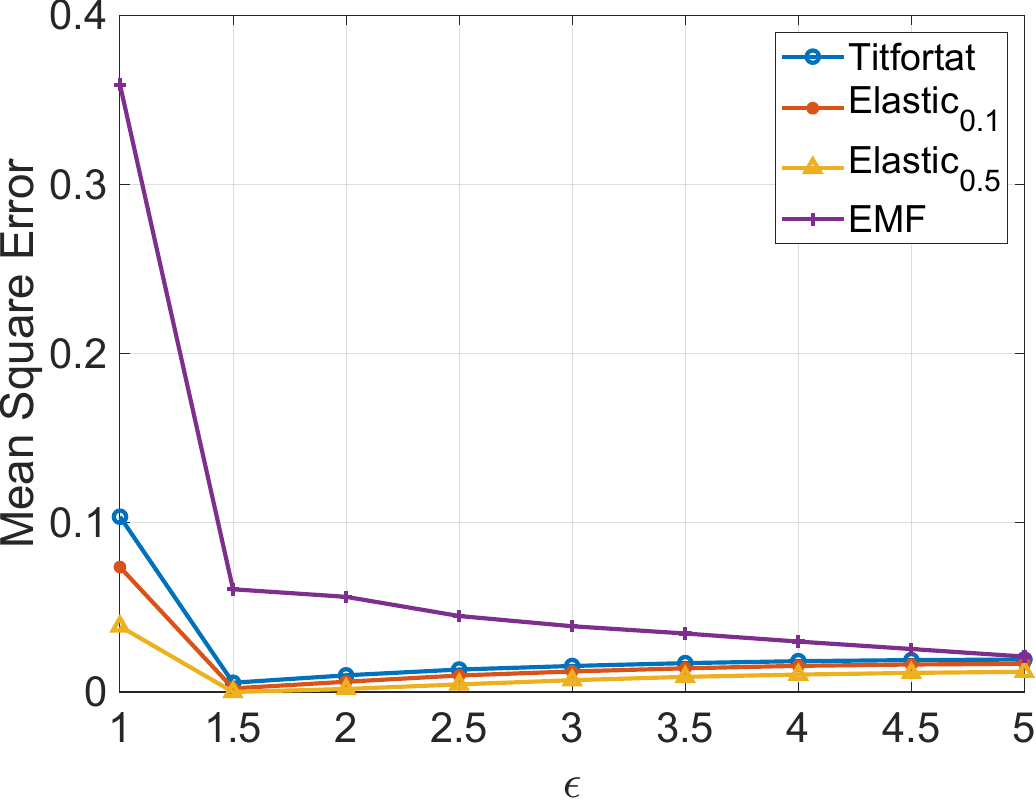}}
    \subfigure[Attack ratio=0.45]{\includegraphics[width=0.32\textwidth]{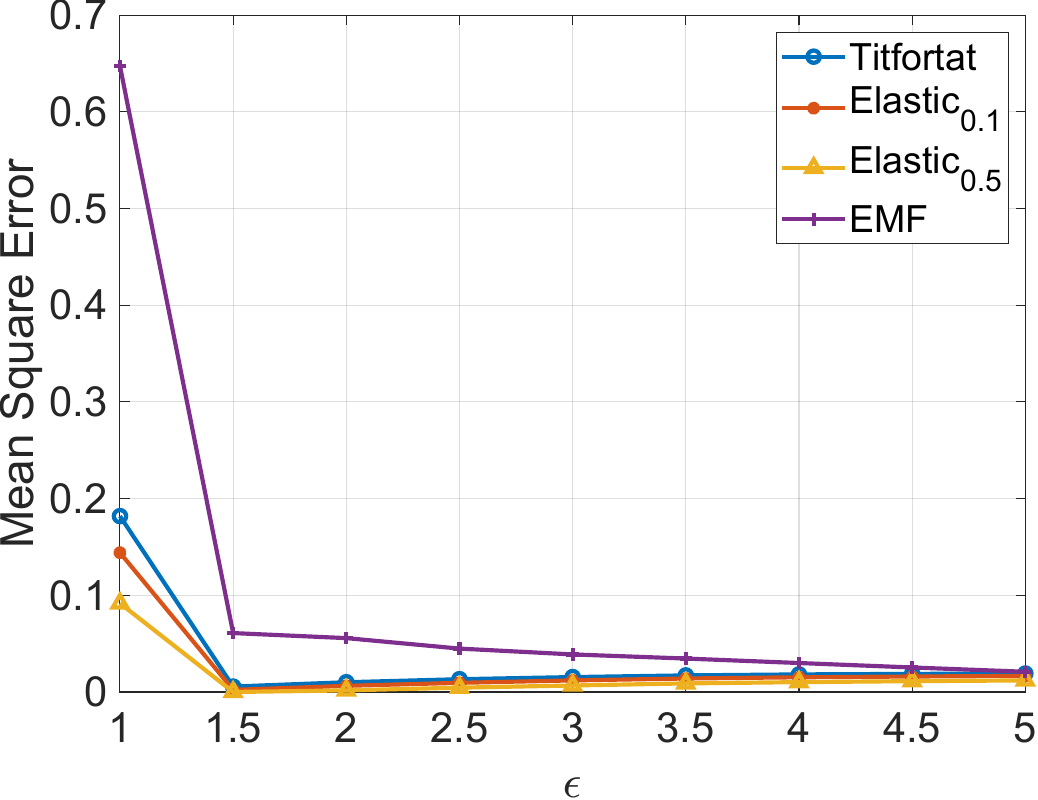}}

    \caption{Comparison of EMF and our proposed approaches}
    \label{fig:EMF}
    \vspace{-4mm}
\end{figure*}

Fig. \ref{fig:EMF} shows the results, where the x-axis represents the privacy budget $\epsilon$, and the y-axis indicates the Mean Square Error (MSE). As evidenced by the graph, in all parameter settings, the EMF consistently falls short of our scheme's performance. Notably, when $\epsilon$ is small, corresponding to a high perturbation intensity, the trimming scheme must accommodate increasing overhead due to false positives. This produces a notable inflection point around $\epsilon=1.5$ in the figure, an effect that becomes eminent when the attack ratio is small. 
\section{Related Work}
\label{Relatedwork}
In recent years, data poisoning attacks and their countermeasures have gained considerable attention, with numerous studies exploring various aspects, particularly in machine learning. Chen et al. \cite{chen2017zoo} devise a black-box attack method capable of bypassing defenses against data poisoning, emphasizing the need for more robust countermeasures. Biggio et al. \cite{biggio2012poisoning} examine poisoning attacks targeting SVM and introduced an effective attack strategy, highlighting the necessity for robust defenses in SVMs. Liu et al. \cite{liu2016delving} explore the transferability of adversarial examples, which are relevant to data poisoning attacks, and develop a black-box attack method based on this property. Steinhardt et al. \cite{steinhardt2017certified} introduce certification, offering guarantees on a model's robustness against data poisoning attacks, and present a certified training algorithm. Jagielski et al. \cite{jagielski2018manipulating} propose an optimal attack strategy for poisoning regression models and develop effective countermeasures to minimize the attacker's impact. Mei and Zhu \cite{mei2015using} employ a machine teaching approach to identify the most damaging data poisoning attacks on machine learning models, providing insights into both attack and defense strategies. Several other works also address data poisoning attacks and their countermeasures \cite{xiao2012adversarial, newell2014practicality, munoz2017towards}.

Manipulation attacks are more eminent in privacy-preserving scenarios (such as LDP \cite{cao2021data, huang2024ldpguard, sun2024ldprecover}) where the perturbations can amplify the effect of poison values, as the honest output follows a distribution, but the injected poison values may locate anywhere. Thus, for a single malicious user, the aggregated value will be larger than it should be, and it may even exceed the upper bound of the input domain. This implies that a small fraction of malicious users can obscure the distribution of honest user's inputs, and this can be extremely fatal when the privacy level is high, or the domain is large. A recent work shows how poor the performance of an LDP protocol can be under a malicious model \cite{cheu2021manipulation}. This work proposes a general manipulation attack in which Byzantine users can freely choose to report any poison values in the domain without following a distribution imposed by the LDP perturbation.

A special case of the general manipulation attack is the input manipulation attack, in which adversaries counterfeit some poison values before perturbation and strictly follow the LDP perturbation protocol. This can be treated as a special case of general manipulation attack with strong evasion, as it provides deniability for malicious users. If it is not possible to question individual users, the poison values are also indistinguishable from honest ones. While this evasion makes the poison values harder to detect, it also degrades the strength of the attack compared to general manipulations. This issue has received much attention, and some attempts have been made towards this problem in recent years. \cite{cao2021data} proposes a robust defense against poisoning attacks in federated learning systems by leveraging Byzantine-resilient aggregation methods. The authors focus on developing a method to detect and mitigate the impact of attackers who send poisoned model updates during the aggregation process. \cite{du2023differential} is an attempt to defend against general colluding attackers in LDP data collection. By exploiting the differences in behavior between attackers and normal users, a maximum likelihood estimation can be utilized to recover an attack distribution based on the collected data. However, this approach has a limitation, as it cannot address situations where attackers intentionally mimic the behavior of normal users. \cite{li2022fine} investigates the problem of data poisoning attacks in the context of graph neural networks. While they study the robustness of graph neural networks under poisoning attacks and propose a novel defense mechanism called robust graph convolutional networks, the evasive adversaries are seldom studied and modeled in a comprehensive framework. 
\section{Conclusion \& Future Work}
\label{Conclusion}
This paper presents a comprehensive game-theoretic model to counter online data poisoning attacks by establishing a viable Stackelberg equilibrium. We utilize the trimming strategy for defense, apply theoretical physics principles to construct an analytical model, and extend the adaptability in privacy-preserving systems with non-deterministic utility functions. Our experimental results, derived from various real-world datasets, validate the effectiveness of our approach. In future work, we plan to derive parameters corresponding to other game-theoretic approaches tailored for multi-round prisoner's dilemma scenarios, as well as the associated Elastic strategies. We also aim to construct theoretical frameworks for games with incomplete information pertinent to black-box models and provide corresponding experimental results.

\section*{Acknowledgement}
This work was supported by the National Natural Science Foundation of China (Grant No: 92270123, 62072390 and 62372122), and the Research Grants Council, Hong Kong SAR, China (Grant No:  15203120, 15226221, 15209922 and C2004-21GF).

\bibliographystyle{plain}
\bibliography{references}

\end{document}